\numberwithin{equation}{section}
\newtheorem{theorem}{Theorem}
\begin{document}

\title{{\normalsize This paper is available from http://www.lcp.coppe.ufrj.br/D1HT/}\\An effective single-hop distributed hash table with high lookup performance and low traffic overhead}

%\title{An effective single-hop distributed hash table with high lookup performance and low traffic overhead
 %\thanks{ }
%\title{An effective single-hop distributed hash table with high lookup performance and low traffic overhead}

\author{
\IEEEauthorblockN{Luiz Monnerat
\\luiz.monnerat@petrobras.com.br
}
\IEEEauthorblockA{Petrobras\\TIC/TIC-E\&P}
\and
\IEEEauthorblockN{Claudio Amorim
\\amorim@cos.ufrj.br
}
\IEEEauthorblockA{COPPE - Computer and Systems Engineering\\
Federal University of Rio de Janeiro (UFRJ)}
}%\author{Luiz~Monnerat, Claudio~Amorim
\markboth{L. Monnerat and C.L. Amorim}{A Low Overhead Single Hop DHT}

\maketitle
%\vspace{-20pt}
%\runningheads{L. Monnerat and C.L. Amorim}{A Low Overhead Single Hop DHT}

\begin{abstract}
Distributed Hash Tables (DHTs) have been used in several applications, but most DHTs have opted to solve lookups with multiple hops, to minimize bandwidth costs while sacrificing lookup latency. This paper presents D1HT, an original DHT which has a peer-to-peer and self-organizing architecture and maximizes lookup performance with reasonable maintenance traffic, and a Quarantine mechanism to reduce overheads caused by volatile peers. We implemented both D1HT and a prominent single-hop DHT, and we performed an extensive and highly representative DHT experimental comparison, followed by complementary analytical studies. In comparison with current single-hop DHTs, our results showed that D1HT consistently had the lowest bandwidth requirements, with typical reductions of up to one order of magnitude, and that D1HT could be used even in popular Internet applications with millions of users. In addition, we ran the first latency experiments comparing DHTs to directory servers, which revealed that D1HT can achieve latencies equivalent to or better than a directory server, and confirmed its greater scalability properties. Overall, our extensive set of results allowed us to conclude that D1HT can provide a very effective solution for a broad range of environments, from large-scale corporate datacenters to widely deployed Internet applications\footnote{This is the pre-peer reviewed version of the following article: Luiz Monnerat and Claudio L. Amorim, An effective single-hop distributed hash table with high lookup performance and low traffic overhead, \emph{Concurrency and Computation: Practice and Experience (CCPE)}, 2014, which has been published in final form at http://onlinelibrary.wiley.com/doi/10.1002/cpe.3342/abstract \cite{D1HT2014CCPE}.}\textsuperscript{,}\footnote{To download this paper and other D1HT resources, including its source code, please visit the D1HT home page at http://www.lcp.coppe.ufrj.br/D1HT/}.
\end{abstract}

\begin{keywords}
Distributed hash tables; overlay networks; P2P; distributed systems; performance
\end{keywords}

\hyphenation{re-an-nouce-ments}

\section{Introduction}

Distributed hash tables (DHTs) are a highly scalable solution for efficiently locating information in large-scale distributed systems; thus they have been used in a wide range of applications, from Internet games to databases. While most DHTs incur in high latencies, recent results showed that DHTs can also be applied in significant classes of applications with performance constraints, such as Internet Service Providers (ISPs), as long as they guarantee low enough latency to access information. Specifically, the development of a proprietary low-latency DHT was critical to the performance of the Amazon's Dynamo system \cite{dynamo}, where scalability, self-organization and robustness were fundamental to supporting a production system over thousands of error-prone nodes, whereas the use of central directories could lead to several problems \cite{McKusick2009}. However, the DHT implemented in Dynamo does not support open environments, has high levels of overhead and, according to its authors, it is unable to scale to very large systems, besides being very application specific. In addition, recent trends in High Performance Computing (HPC) and ISP datacenters indicate significant increases in the system sizes \cite{Barroso2008,Kindratenko2011}, including a huge demand from cloud computing \cite{Armbrust2009,Buyya2009}, which will challenge the scalability and fault tolerance of client/servers solutions. In fact, to support a wide span of large-scale distributed applications, new self-organizing DHTs with greater levels of scalability, performance and efficiency are required in order to be used as a commodity substrate for environments ranging from corporate datacenters to popular Internet applications.

The information stored in a DHT is located through \texttt{lookup} requests, which are solved with the use of \emph{routing tables} stored on all participant peers. As peers can freely enter and leave the network, DHTs typically use maintenance messages to keep the routing tables up to date. However, maintenance messages increase the DHT's network traffic, which contributes adversely to both the lookup latency and network bandwidth overheads. Overall, the size of routing tables is a critical issue in a DHT system and poses a classic latency vs. bandwidth tradeoff. Concretely, large routing tables allow faster lookups because peers will have more routing options, but they increase the bandwidth overheads due to higher maintenance traffic.

In this regard, the first DHT proposals (e.g., \cite{viceroy02,kademlia02,can01,pastry01,chord03,tapestry04}) opted to use small routing tables in such a way that each lookup takes $O(\log(n))$ hops to be solved ($n$ is the system size), aiming to save bandwidth to the detriment of latency and thus compromising the use of such \emph{multi-hop} DHTs for performance sensitive applications. However, as similar tradeoffs between latency and bandwidth occur across several technologies, the latency restrictions tend to be more critical in the long term, as it has already been shown that `over time bandwidth typically improves by more than the square of the latency reductions' \cite{patterson04}. From this perspective, a number of \emph{single-hop} DHTs have been proposed (e.g., \cite{gupta09,monneratD1HT,tang05}), which are able to provide low latency access to information because each peer maintains a full routing table. Therefore, the lookup performance achieved by these single-hop DHTs should allow their use even in latency-sensitive environments where multi-hop DHTs cannot satisfy the latency constraints. Besides, it has been shown that, for systems with high lookup rates, single-hop DHTs may in fact reduce the \emph{total} bandwidth consumption, since each lookup in a multi-hop DHT typically consumes $O(\log(n))$ more bandwidth than a single-hop lookup, and this extra lookup overhead may offset the routing table maintenance traffic \cite{rodrigues04,tang05}. Nevertheless, most single-hop DHTs still incur high bandwidth overheads, have high levels of load imbalance, or are unable to support dynamic environments.

With these problems in mind, this work provides several relevant contributions that will improve the understanding and use of single-hop DHTs in a wide range of distributed systems. We present D1HT, an original single-hop DHT combining low bandwidth overheads and good load balance even in dynamic environments, while being able to efficiently adapt to changes in the system behavior using a self-organizing and pure P2P approach. We will also present a Quarantine mechanism that can reduce the system overheads caused by volatile nodes in P2P systems.

To quantify the latencies and overheads of single-hop DHTs, we implemented D1HT and 1h-Calot \cite{tang05} from scratch and evaluated both single-hop DHTs with up to 4,000 peers and 2,000 physical nodes in two radically different environments (an HPC datacenter and a worldwide dispersed network) under distinct churn rates. Those experiments provided a number of very important results, as they validated the analyses for both DHTs, confirmed their low latency characteristics, and showed that D1HT consistently has less bandwidth requirements than 1h-Calot. Besides, our experiments also showed that D1HT has negligible CPU and memory overheads that allow its use even in heavily loaded nodes, as it used less than 0.1\% of the available CPU  cycles and very small memory to store the routing tables, even under a high rate of concurrent peer joins and leaves.

Based on the validation of the D1HT and 1h-Calot analyses, we further performed an analytical comparison among D1HT, 1h-Calot and OneHop \cite{gupta09} for system sizes of up to 10 million peers. Our results revealed that D1HT consistently had the lowest maintenance overheads, with reductions of up to one order of magnitude in relation to both OneHop and 1h-Calot. Moreover, these results also indicated that D1HT is able to support vast distributed environments with dynamics similar to those of widely deployed P2P applications, such as BitTorrent, Gnutella and KAD, with reasonable maintenance bandwidth demands. Overall, D1HT´s superior results are due to its novel P2P mechanism that groups membership changes for propagation without sacrificing latency. This mechanism was based on a theorem that will be presented in this paper, which allows each peer in a D1HT system to independently and dynamically adjust the duration of the buffering period, while assuring low latency lookups.

While scalable performance has been a fundamental argument in favor of DHTs over central directory servers, we are not aware of any published experiments demonstrating it. To fill in this gap, we performed the first experimental latency comparison among three DHTs and a directory server, using up to 4,000 peers. These experiments demonstrated the superior single-hop DHT scalability properties and provided us with other important results that will be presented in this work.

Except from our preliminary D1HT experiments \cite{monnerat09}, all previous DHT comparative evaluations with real implementations have used a few hundred \emph{physical} nodes at most and have been restricted to a single environment (e.g., \cite{gupta09,pond03,tapestry04}). Thus, the evaluation presented in this paper, which used up to 4,000 peers in two radically distinct environments, can be regarded as a highly representative experimental DHT comparison, and the first to compare the latencies provided by distinct DHTs and a directory server.

Finally, our extensive set of experimental and analytical results allowed us to conclude that D1HT consistently has the lowest overheads among the single-hop DHTs introduced so far, besides being more scalable than directory servers, and that D1HT can potentially be used in a multitude of environments ranging from HPC and ISP datacenters to applications widely deployed over the Internet.

The rest of this paper is organized as follows. The next two sections discuss related work and present the D1HT system design, and in Section \ref{sec:maintenance} we present the event dissemination mechanism used by D1HT. In Sections \ref{sec:quarantine} and \ref{sec:d1ht-implementation}, we present Quarantine and our D1HT implementation. Sections \ref{sec:expresults} and \ref{sec:anresults} present our experimental and analytical results, which are discussed in Section \ref{sec:discussion}. We then conclude the paper.

\section{Related Work} \label{sec:relatedwork}

In recent years, DHTs and P2P systems have been subjects of intense research. In particular, the design of a DHT that supports large-scale networks is a very difficult problem on its own, which poses specific challenges of scalability and efficiency. Therefore, in this work, we focus on single-hop DHTs whose event dissemination mechanisms aim at large and dynamic environments. In practice, besides D1HT, the only two other single-hop DHTs that support dynamic networks are the OneHop \cite{gupta09} and 1h-Calot \cite{tang05} systems, both of which differ from D1HT in the following fundamental ways.

The 1h-Calot \cite{tang05} DHT, which was introduced concurrently with D1HT \cite{monneratD1HT}, also uses a pure P2P topology, though they differ in significant ways. First, 1h-Calot uses event\footnote{From now on we will refer to peer joins and leaves simply as \emph{events}.} propagation trees based on peer ID intervals, while D1HT constructs its dissemination trees using message TTLs. Second, 1h-Calot uses explicit heartbeat messages to detect node failures, while D1HT relies on the maintenance messages. Third and most important, 1h-Calot peers cannot effectively buffer events and, at the same time, ensure that the lookups will be solved with a single hop, even for hypothetical systems with fixed size and peer behavior. In contrast, D1HT is able to effectively buffer events for real and dynamic systems without sacrificing latency.

Besides D1HT, OneHop is the only other single-hop DHT that is able to effectively buffer events for dissemination. However, while D1HT is a pure P2P and self-organizing system, OneHop relies on a three-level hierarchy to implement event buffering, and its topology incurs high levels of load imbalance among its different types of nodes. Additionally, to achieve its best performance, all nodes in an OneHop system must agree on some system-wide topological parameters \cite{tang05}, which are likely to be difficult to implement in practice, especially as the best parameters should change over time according to the system size and behavior.

In addition to the differences discussed above, D1HT is able to achieve overheads that are up to one order of magnitude smaller than those of both 1h-Calot and OneHop, as we will see in Section \ref{sec:anresults}.

Except for D1HT, 1h-Calot and OneHop, all other single-hop DHTs introduced so far do not support large and dynamic environments \cite{dynamo,sfdht09,risson06a,risson09,rodrigues02}. Among these, our 1h-Calot overhead results should  be also valid for SFDHT \cite{sfdht09} and 1HS \cite{risson06a}, as 1HS is based on the 1h-Calot maintenance algorithm and SFDHT uses a similar event dissemination mechanism.

While single hop DHTs must maintain full routing tables, some systems opted to use much smaller routing tables  to solve lookups with a constant number (i.e., $O(1)$) of multiple hops. For example, some $O(1)$ DHTs use $O(\sqrt{n})$ routing tables  to solve lookups with two hops. In this way, in a one million peer network, such a DHT system will maintain a routing table with a few thousands entries, which will prevent it to  address directly all one million peers with a single hop. On the other hand, these DHTs will have lower maintenance overhead, and they may be suitable for applications that are not latency sensitive. Besides solving lookups with $O(1)$ multiple hops, those systems differ from D1HT in other important aspects. For instance, Z-Ring \cite{zring05} uses Pastry \cite{pastry01} to solve lookups with two hops in systems with up to 16 million nodes. Tulip \cite{tulip05} and Kelips \cite{kelips03} use gossip to maintain routing tables of size $O(\sqrt{n})$ to solve lookups with two hops. Structured Superpeers \cite{superpeers03} and LH* \cite{litwin2005} use hierarchical topologies to solve lookups with three hops.

Accordion \cite{accordion:nsdi05} and EpiChord \cite{epichord2004} do not ensure a maximum number of lookup hops, but they use parallel lookups and adaptation techniques to minimize lookup latencies, and they can converge to one hop latencies depending on the bandwidth available. Some of those techniques can be implemented over our basic D1HT protocol (e.g., parallel lookups). Beehive \cite{beehive04} is a replication framework to speed up lookups for popular keys. Concurrently to D1HT and 1h-Calot, a mechanism for information dissemination with logarithmic trees was proposed in \cite{castro2005}, but, in contrast to D1HT, it does not perform any kind of aggregation. Scribe \cite{Castro2002} and SplitStream \cite{Castro2003} disseminate information, but they do not perform aggregation nor use logarithmic trees among several other differences in relation to D1HT.

Quarantine approaches have been proposed as a means of intervention for preventing vulnerabilities in the Internet, such as worm threats \cite{Moore03internetquarantine}, but, to the best of our knowledge, this is first work to propose, evaluate and show the effectiveness of a quarantine approach for P2P systems \cite{monneratD1HT}.

\section{D1HT System Design} \label{sec:singlehopdhts}

A D1HT system is composed of a set $\mathbb{D}$ of $n$ peers and, as in Chord \cite{chord03}, the keys are mapped to peers based on \emph{consistent hashing} \cite{consistenthashing97}, where both peers and keys have IDs taken from the same identifier ring $[0:N]$, with $N >> n$. The key and peer IDs are, respectively, the hashes (e.g., SHA1 \cite{sha1}) of the key values and the peer IP addresses. Similarly to previous studies (e.g., \cite{gupta09,li04comparing,viceroy02,chord03}), we used  consistent hashing and the random properties of the cryptographic function, which allowed us to assume that the events and lookup targets are oblivious to the peer IDs and randomly distributed along the ring.

In D1HT, each peer has a full routing table, and so any lookup can be solved with just one hop, provided that its routing table is up to date. However, if the origin peer is unaware of an event that has happened in the vicinity of the target peer (e.g., a node has joined or left the system), the lookup may be initially addressed either to a wrong peer or to a peer that has already left the system. In both cases, the lookup will eventually succeed after retrying \cite{chord03}, but it will take longer than expected. To completely avoid those \emph{routing failures} (as the lookup will eventually succeed \cite{chord03}, we do consider it as a \emph{routing failure} instead of a \emph{lookup failure}), D1HT would have to immediately notify all its $n$ peers about the occurrence of any event in the system, which is simply infeasible. In practice, single-hop DHTs must try to keep the fraction of routing failures below an acceptable maximum by implementing mechanisms that can quickly notify all peers in the system about the events as they happen. These event dissemination mechanisms represent the primary distinction among the single hop DHTs, and in the next section we will present the EDRA mechanism introduced with D1HT.

As in other works (e.g., \cite{gupta09,monneratD1HT}), we will assume that the systems are churned with an event rate (or churn rate) $r$ proportional to the system size, according to Equation \ref{equ:r} below, where $S_{avg}$ is the peer average session length.
\begin{equation} \label{equ:r} r = 2 \cdot n /S_{avg} \end{equation}

We refer to the \emph{session length} as the amount of time between a peer's join and its subsequent leave; thus, Equation \ref{equ:r} simply assumes that, as expected, each peer generates two events per session (one join and one leave). As the average session lengths of a number of different P2P systems have already been measured (e.g., \cite{bellissimo04,saroiu02,steiner09}), the equation above allows us to calculate event rates that are representative of widely deployed applications. In Sections \ref{sec:expresults} and \ref{sec:anresults}, we will present experimental and analytical results with different session lengths, which will allow us to evaluate its effect on the maintenance overheads.

In D1HT, any message should be acknowledged to allow for retransmissions in the case of failures, which can be done implicitly by a protocol like TCP or be explicitly implemented over an unreliable protocol like UDP. We assume that the maintenance messages are transmitted over UDP to save bandwidth, but we consider the choice of the transport protocol for all other messages as an implementation issue. We also consider that the details of the joining protocol should be decided at the implementation level. In Section \ref{sec:d1ht-implementation}, we will discuss how we ensure message delivery in our D1HT implementation and what joining protocol we used.

D1HT is a pure P2P and self-organizing system, but its flat topology does not prevent it from being used as a component of hierarchical approaches aiming to exploit the heterogeneity of the participant nodes in a system. For example, the FastTrack network \cite{liang06} has two classes of nodes: the super nodes (SN) and ordinary nodes (ON). SNs are better provisioned nodes, and each SN acts as a central directory for a number of ONs, while flooding is used among the SNs. As measurements \cite{liang06} have shown that FastTrack should have less than 40K SNs with an average session length of 2.5 hours, the analysis that we will present in Section \ref{sec:maintenance} shows that we could use a D1HT system to connect the SNs with maintenance costs as low as 0.9 kbps per SN. This overhead should be negligible, especially if we consider that the SNs are well-provisioned nodes and that we would avoid the flooding overheads while improving the lookup performance.

We will not address issues related to malicious nodes and network attacks, although it is clear that, due to their high out-degree, single-hop DHTs are naturally less vulnerable to those kinds of menaces than low-degree multi-hop DHTs.

\section{EDRA} \label{sec:maintenance} \label{sec:d1ht}

As each peer in a D1HT system should know the IP address of every other peer, any event should be \emph{acknowledged} by all peers in the system in a timely fashion to avoid stale routing table entries. Here, we say that a peer \emph{acknowledges} an event when it either detects the join (or leave) of its predecessor or receives a message notifying of an event.

To efficiently propagate any event to all peers in a system, D1HT makes use of the Event Detection and Report Algorithm (EDRA), which can announce any event to the whole system in logarithmic time with a pure P2P topology and provides good load-balance properties coupled with low bandwidth overhead. Additionally, EDRA is able to group several events into a single message to save bandwidth, yet it ensures an upper bound on the fraction of stale routing table entries.

At first glance, grouping several event notifications per message seems to be an obvious and easy way to save bandwidth, as any peer can locally buffer the events that occur during a period of time and forward them in a single message. However, such a mechanism imposes delays in the event dissemination, which in turn will lead to more stale entries in the routing tables; thus, the difficult question is the following: \emph{For how long can each peer buffer events while assuring that the vast majority of the lookups (e.g., $99\%$) will be solved with just one hop?} This problem is especially difficult because the answer depends on a number of factors that vary unpredictably, including the system size and churn rate. EDRA addresses this issue based on a theorem that will be presented in this section, which allows each peer to independently adjust the length of the buffering period while assuring that at least a fraction 1-$f$ of the lookups will be solved with a single hop ($f$ is typically $1\%$, but it can be tuned according to the application).

In this section, we will formally describe EDRA by means of a set of rules, prove its correctness and load balance properties, and present its analysis. Before we begin, we will define a few functions to make the presentation clearer. For any $i \in \mathbb{N}$ and $p \in \mathbb{D}$, the $i_{th}$ successor of $p$ is given by the function $succ(p,i)$, where $succ(p,0)$=$p$ and $succ(p,i)$ is the successor of $succ(p,i$-$1)$ for $i>0$. Note that for $i \ge n$, $succ(p,i)$=$succ(p,i$-$n)$. In the same way, the $i_{th}$ predecessor of a peer $p$ is given by the function $pred(p,i)$, where $pred(p,0)$=$p$ and $pred(p,i)$ is the predecessor of $pred(p,i$-$1)$, for $i>0$. As in \cite{viceroy02}, for any $p \in \mathbb{D}$ and $k \in \mathbb{N}$, $stretch(p,k)$=$\{\forall p_i \in \mathbb{D} \mid p_i$=$succ(p,i) ~ \land ~ 0 \le i \le k\}$. Note that $stretch(p,n$-$1)$=$\mathbb{D}$ for any $p \in \mathbb{D}$.

\subsection{The EDRA Rules} \label{sec:rules}

In this section, we will first present a brief description of EDRA and then formally define it. To save bandwidth, each peer buffers the events acknowledged during intervals of $\Theta$ seconds ($\Theta$ intervals), where $\Theta$ is dynamically tuned (as it will be seen in Section \ref{sec:tuning}). At the end of a $\Theta$ interval, each peer propagates the events locally buffered by sending up to $\rho$=$\lceil\log_2(n)\rceil$ maintenance messages, as shown in Figure \ref{fig:hops}. Each maintenance message $M(l)$ will have a Time-To-Live (TTL) counter $l$ in the range [0:$\rho$) and will be addressed to $succ(p,2^l)$. To perform event aggregation while assuring that any event will reach all peers in the system, each message $M(l)$ will include all events brought by any message $M(j), j>l$, received in the preceding $\Theta$ seconds. To initiate an event dissemination, the successor of the peer suffering the event will include it in all messages sent at the end of the current $\Theta$ interval.
The rules below formally define the EDRA algorithm we have briefly described above:

\begin{description}
\item[Rule 1:] ~ Every peer will send at least one and up to $\rho$ maintenance messages at the end of each $\Theta$ sec interval ($\Theta$ interval), where $\rho$=$\lceil\log_2(n)\rceil$.

\item[Rule 2:] ~ Each maintenance message $M(l)$ will have a distinct TTL $l$, $0 \le l < \rho$, and carry a number of events. All events brought by a message $M(l)$ will be \emph{acknowledged} with $TTL$=$l$ by the receiving peer.

\item[Rule 3:] ~ A message will only contain events acknowledged during the ending $\Theta$ interval. An event acknowledged with $TTL$=$l$, $l>0$, will be included in all messages with $TTL<l$ sent at the end of the current $\Theta$ interval. Events acknowledged with $TTL$=0 will not be included in any message.

\item[Rule 4:] ~ Messages with $TTL$=0 will be sent even if there is no event to report. Messages with $TTL>0$ will only be sent if there are events to be reported.

\item[Rule 5:] ~ If a peer $P$ does not receive any message from its predecessor $p$ for $T_{detect}$ sec, $P$ will probe $p$ to ensure that it has left the system and, after confirmation, $P$ will acknowledge $p$ leaving.

\item[Rule 6:] ~ When a peer detects an event in its predecessor (it has joined or left the system), this event is considered to have been \emph{acknowledged} with $TTL$=$\rho$ (so it is reported through $\rho$ messages according to Rule 3).

\item[Rule 7:] ~ A peer $p$ will send all messages with $TTL$=$l$ to $succ(p,2^l)$.

\item[Rule 8:] ~ Before sending a message to $succ(p,k)$, $p$ will discharge all events related to any peer in $stretch(p,k)$.

\end{description}

\begin{figure*}[tb] \centerline{\includegraphics[width=14.0cm]{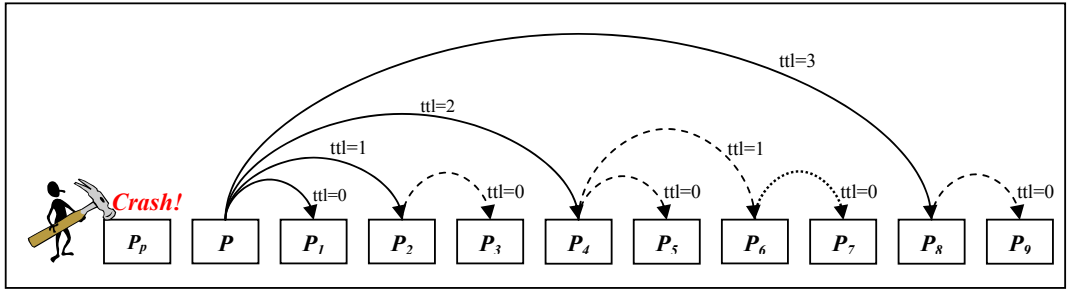}}
\caption{This figure shows a D1HT system with 11 peers, where peer $p$ crashes and this event is detected and reported by its successor $P$. In the figure, peers $P_i$ are such that $P_i$=$succ(P,i)$. The figure also shows the $TTL$ of each message sent.}\label{fig:hops}
\end{figure*}

Rules 4 and 5 should allow each peer to maintain pointers to its correct successor and predecessor even in the case of peer failures. Moreover, to improve robustness, any peer $p$ should run a local stabilization routine whenever it does not receive a reply to a msg with TTL=0 or when it receives a msg with TTL=0 (or TTL=1) from others than its predecessor (or $pred(p,1)$), and this routine should allow any peer to detect its correct predecessor and successor even if multiple consecutive peers fail simultaneously. As there are already routines proposed in the literature that can accomplish these tasks (e.g. \cite{chord03}), we leave its details to be decided at the implementation level.

Figure \ref{fig:hops} shows how EDRA disseminates information about one event to all peers in a system according to the rules just presented, and it illustrates some properties that we will formally prove in the next section. The figure presents a D1HT system with 11 peers ($\rho=4$), where peer $p$ crashes and this event $\varepsilon$ is detected and reported by its successor $P$. The peers in the figure are shown in a line instead of a ring to facilitate the presentation. Note that $P$ acknowledges $\varepsilon$ after $T_{detect}$ sec (Rule 5) with $TTL=\rho$ (Rule 6). According to Rules 3 and 7, $P$ will forward $\varepsilon$ with $\rho=4$ messages addressed to $P_1$=$succ(P,2^0)$, $P_2$=$succ(P,2^1)$, $P_4$=$succ(P,2^2)$ and $P_8$=$succ(P,2^3)$, as represented by the solid arrows in the figure. As $P_2$, $P_4$ and $P_8$ will acknowledge $\varepsilon$ with $TTL>0$, they will forward it to $P_3$=$succ(P_2,2^0)$, $P_5$=$succ(P_4,2^0)$, $P_6$=$succ(P_4,2^1)$ and $P_9$=$succ(P_8,2^0)$, as represented by the dashed arrows. Because $P_6$ will acknowledge $\varepsilon$ with $TTL$=1, it will further forward it to $P_7$=$succ(P_6,2^0)$ (doted arrow). Note that Rule 8 prevents $P_8$ from forwarding $\varepsilon$ to $succ(P_8,2^1)$ and $succ(P_8,2^2)$, which in fact are $P$ and $P_3$, saving these two peers from having to acknowledge $\varepsilon$ twice.

\subsection{EDRA Correctness} \label{sec:correcteness}

The EDRA rules ensure that any event will be delivered to all peers in a D1HT system in logarithmic time, as we will shortly show in Theorem \ref{theo:correct}. For this theorem, we will ignore message delays and consider that all peers have synchronous intervals, i.e., the $\Theta$ intervals of all peers start at exactly the same time. The absence of message delays means that any message will arrive immediately at its destination, and because we are also considering synchronous $\Theta$ intervals, any message sent at the end of a $\Theta$ interval will arrive at its destination at the beginning of the subsequent $\Theta$ interval. We will also assume that no new event happens until all peers are notified about the previous event. All these practical issues will be addressed in Section \ref{sec:practical}.
\begin{theorem} \label{theo:correct}An event $\varepsilon$ that is acknowledged by a peer $p$ with $TTL$=$l$ and by no other peers in $\mathbb{D}$ will be forwarded by $p$ through $l$ messages in such a way that $\varepsilon$ will be acknowledged exactly once by all peers in $stretch(p,2^l$-$1)$ and by no other peer in the system. The average time $T_{sync}$ for a peer in $stretch(p,2^l$-$1)$ to acknowledge $\varepsilon$ will be at most $l\cdot \Theta/2$ after $p$ acknowledged $\varepsilon$. \end{theorem}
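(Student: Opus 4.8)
The plan is to prove both claims by strong induction on $l$, exploiting the recursive self-similarity of an EDRA dissemination: a peer $p$ that acknowledges $\varepsilon$ with $TTL=l$ behaves as the root of a tree whose $l$ children are $succ(p,2^0),succ(p,2^1),\dots,succ(p,2^{l-1})$, and the subtree rooted at $succ(p,2^j)$ is itself a dissemination of the same shape but with $TTL=j$.

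\textbf{Coverage and message count.} For $l=0$, Rule~3 forbids forwarding an event acknowledged with $TTL=0$, so $p$ sends no message carrying $\varepsilon$, and $stretch(p,2^0-1)=\{p\}$ is acknowledged once --- by $p$ --- and by no one else, as assumed. For $l\ge 1$, Rules~3, 4 and 7 make $p$ emit exactly the $l$ messages $M(0),\dots,M(l-1)$, each carrying $\varepsilon$, with $M(j)$ delivered to $succ(p,2^j)$, which then acknowledges $\varepsilon$ with $TTL=j$ by Rule~2. Since no peer other than $p$ had acknowledged $\varepsilon$, the $l$ sub-disseminations started from the distinct peers $succ(p,2^j)$ act on disjoint index ranges and hence do not interfere, so the inductive hypothesis applies to each and $\varepsilon$ reaches exactly the peers of $stretch(succ(p,2^j),2^j-1)$, each once. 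Finally, the integer sets $\{0\}$ and $[2^j,2^{j+1})$, $j=0,\dots,l-1$, partition $\{0,1,\dots,2^l-1\}$, so $\{p\}$ together with the sets $stretch(succ(p,2^j),2^j-1)$ partition $stretch(p,2^l-1)$, which gives the ``$l$ messages'' count and the ``acknowledged exactly once by $stretch(p,2^l-1)$ and by no one else'' assertion.

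\textbf{Timing.} The idea is to sum, over the $2^l$ peers of the tree, the delay between $p$'s acknowledgement and that peer's acknowledgement, then divide by $2^l$. The crucial observation is that, under the stated idealizations (synchronous $\Theta$ intervals, instantaneous messages), a peer that learns $\varepsilon$ from a maintenance message does so exactly at the start of a $\Theta$ interval and re-emits it exactly at that interval's end, so every tree edge incurs a delay of exactly $\Theta$ --- except, when $p$ is the original detector of $\varepsilon$, the first hop out of $p$, which is fired at a uniformly random instant of $p$'s interval and is therefore completed an expected $\Theta/2$ later. Writing $S(l)$ for the sum of the depths of all the tree nodes, the tree recursion gives $S(l)=\sum_{j=0}^{l-1}\bigl(S(j)+2^j\bigr)$ with $S(0)=0$, which solves to $S(l)=l\,2^{l-1}$. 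If $p$ is an intermediate forwarder every edge costs $\Theta$, so the total delay is $\Theta\,S(l)$ and the average over the $2^l$ peers is exactly $\Theta\,S(l)/2^l=l\Theta/2$; if $p$ is the detector, the first edge on every peer's path costs at most $\Theta$ (expectation $\Theta/2$) instead of $\Theta$, which only lowers the average, so $T_{sync}\le l\Theta/2$ in either case.

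\textbf{Main obstacle.} The delicate point is the wrap-around regime $2^l>n$, which can only occur for the detector, where $l=\rho=\lceil\log_2 n\rceil$, $2^{l-1}<n\le 2^l$, and $stretch(p,2^l-1)=\mathbb{D}$: after reduction modulo $n$ the index ranges $[2^j,2^{j+1})$ overlap, and a naive unfolding of the tree would re-deliver $\varepsilon$ to peers already reached --- so the ``$l$ messages'', ``partition'' and ``exactly once'' arguments above must be refined. This is exactly what Rule~8 repairs: before forwarding toward $succ(q,k)$ a peer discards $\varepsilon$ precisely when the peer that $\varepsilon$ concerns lies in $stretch(q,k)$, i.e.\ when that hop would wrap past the event's origin (the peer $P_8$ of Figure~\ref{fig:hops} illustrates this). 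I would therefore finish the coverage step by showing that Rule~8 removes exactly the duplicated acknowledgements, leaving each peer of $stretch(p,2^l-1)$ acknowledged once; and since this pruning only deletes acknowledgements --- and the deleted occurrences are duplicates lying deep in the tree --- the timing bound proved above for the full, non-wrapping tree carries over.
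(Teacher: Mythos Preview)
Your proof is correct and follows essentially the same strong-induction strategy as the paper's own argument. The only cosmetic differences are that the paper takes $l=1$ as its base case and obtains the timing bound by invoking the inductive hypothesis on each subtree directly (simply asserting that the subtree averages combine to $l\Theta/2$), whereas you start at $l=0$, compute the depth sum via the explicit recurrence $S(l)=\sum_{j<l}(S(j)+2^j)$, and give a more detailed---though still informal---treatment of the detector's first hop and of the Rule~8 wrap-around than the paper's one-line mentions.
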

\begin{proof}By strong induction in $l$. For $l$=1, the EDRA rules imply that $p$ will only forward $\varepsilon$ through a message with $TTL$=0 to $succ(p,1)$. As this message should be sent at the end of the current $\Theta$ interval, $succ(p,1)$ will acknowledge $\varepsilon$ at most $\Theta$ sec after $p$ acknowledged it, making the average time for peers in $stretch(p,1)$=$\{p,succ(p,1)\}$ to be $T_{sync}$=$ (\Theta+0)/2 $=$ \Theta/2$ (at most). Thus, the claim holds for $l$=1.

For $l>1$, the EDRA rules imply that $p$ will forward $\varepsilon$ through $l$ messages at the end of the current $\Theta$ interval, each one with a distinct TTL in the range [0~,~$l$). Then, after $\Theta$ sec (at most) each peer $p_k$=$succ(p,2^k)$, $0\le k < l$, will have acknowledged $\varepsilon$ with $TTL$=$k$. Applying the induction hypothesis to each of those $l$ acknowledgements, we deduce that each acknowledgment made by a peer $p_k$ implies that all peers in $stretch(p_k,2^k$-$1)$ will acknowledge $\varepsilon$ exactly once. Accounting for all $l$-$1$ acknowledgments made by the peers $p_k$, and that Rule 8 will prevent $\varepsilon$ from being acknowledged twice by any peer in $stretch(p,2^\rho$-$n)$, we conclude that $\varepsilon$ will be acknowledged exactly once by all peers in $stretch(p,2^l$-$1)$. By the induction hypothesis, none of those peers will forward $\varepsilon$ to a peer outside this range, so $\varepsilon$ will not be acknowledged by any other peers in the system. The induction hypothesis also ensures that the average time for the peers in each $stretch(p_k,2^k$-$1)$ to acknowledge $\varepsilon$ will be (at most) $k \cdot \Theta/2$ after the respective peer $p_k$ acknowledged it, which will lead to $T_{sync}$=$l \cdot \Theta/2$ (at most) for $stretch(p,2^l$-$1)$.

\end{proof}

\vspace{5pt}

Applying Theorem \ref{theo:correct} and the EDRA rules to a peer join (or leave) that is acknowledged by its successor $p$, we can conclude that this event will be further acknowledged exactly once by all peers in $stretch(p,2^{\rho}$-$1)$=$\mathbb{D}$. Moreover, the upper bound on the average acknowledge time will be $\rho \cdot \Theta/2$. We can thus formally ensure three very important EDRA properties. First, any event will be announced to all peers in a D1HT system, ensuring that they will receive the necessary information to maintain their routing tables. Second, each peer will be notified of any event just once, avoiding unnecessary bandwidth overheads and ensuring good income load balance. Third, for each event, the average notification time is bounded by $\rho \cdot \Theta/2$, and this result will be used in Section \ref{sec:tuning} to develop a mechanism that will allow each peer in a D1HT system to dynamically find the optimal value for $\Theta$ based on the current system size and behavior.

We can also show that the last peer to acknowledge an event would be $succ(p,n-1)$ (which is $pred(p,0)$), $\rho \cdot \Theta$ secs after $p$ had acknowledged the event. In practice, $pred(p,0)$ will know about the event much before, due to the stabilization routine discussed in Section \ref{sec:rules}.

\subsection{Practical Aspects} \label{sec:practical}

In Theorem \ref{theo:correct}, we did not consider the effects of message delays and asynchronous $\Theta$ intervals; thus, we will turn to them in this section. To compute those effects, we will assume that each maintenance message will require an average delay of $\delta_{avg}$ to reach its target, and it will typically arrive at the middle of a $\Theta$ interval. Therefore, under those more realistic assumptions, each peer in the event dissemination path should add an average of $\delta_{avg}$+$\Theta/2$ to the event propagation time, leading to the adjusted value $\rho \cdot (2\cdot\delta_{avg}+\Theta)/4$. Note that we have not yet considered the time to detect the event, which we will assume to be $T_{detect}$=$2 \cdot \Theta$, reflecting the worst-case scenario in which, after one missing message with $TTL$=0, a peer will probe its predecessor for up to $\Theta$ sec before reporting its failure. Thus, the upper bound on the average acknowledge time for any event will be
\begin{equation}
\label{equ:Tavg}
{T_{avg}= 2 \cdot \Theta + \rho \cdot (\Theta+ 2 \cdot \delta_{avg})/4 ~ sec}
\end{equation}
Equation \ref{equ:Tavg} overestimates $T_{avg}$ because it only considers the worst-case of peer failures, whereas we should have set $T_{detect}=0$ for joins and voluntary leaves.

In Theorem \ref{theo:correct}, we also considered that no new event would happen until all peers had been notified of a previous event, which is not a reasonable assumption for real and dynamic systems. While the admission of new peers should be correctly handled by the joining protocol, peer leaves are more complicated, and we may not expect that all peers in a system will have identical routing tables. For instance, when a peer fails before forwarding the locally buffered events, the propagation chain for these events will be partially broken. However, because this problem may occur only once per peer session (at most), it should not have a significant effect, as the duration of the buffering period (a few tens of seconds at most \cite{monneratD1HT}) is typically orders of magnitude smaller than the average session length (e.g., almost three hours for KAD and Gnutella). For example, for systems with Gnutella behavior the results presented in \cite{monneratD1HT} show that this problem should happen only once for about 1500 $\Theta$ intervals. So even if all nodes leaves were due to failures at the exact end of the $\Theta$ intervals (which is a very conservative assumption), less than 0.07\% of the events forwarded by each peer during its lifetime would be lost due to this reason. If we consider that half of the leaves are due to failures (which is also conservative), and that those failures typically occurs at the middle of the $\Theta$ intervals, then less than 0.02\% of the events forwarded by each peer would be lost (in other words, only one in around 6000 events forwarded by each peer would get lost due to this issue).

In fact, in Section \ref{sec:expresults}, we will see that D1HT was able to solve more than 99\% of the lookups with just one hop in all experiments, even under a high rate of concurrent joins and leaves, which is a strong experimental evidence that the routing failures due to those practical issues should not be relevant in relation to $f$.

Besides the problems discussed so far, there are a number of practical situations that can lead to stale routing table entries in D1HT and other DHT systems, and we will not be able to completely remedy all of them. For that reason, as in many other systems (e.g., \cite{gupta09,kelips03,epichord2004,accordion:nsdi05,kademlia02}), any D1HT implementation should allow the peers to learn from the lookups and maintenance messages to perform additional routing table maintenance without extra overhead. For example, a message received from an unknown peer should imply its insertion in the routing table. In the same way, routing failures will provide information about peers that have left or joined the system. In addition, many other known mechanisms that are commonly used in other DHT systems could be implemented on top of our base D1HT design, such as event re-an\-nounce\-ments \cite{tang05} and gossip \cite{risson09} to improve routing table accuracy, or parallel lookups (as in \cite{accordion:nsdi05,epichord2004}) to mitigate the latency penalties caused by timeouts due to missed leave notifications. We should point out that even with parallel lookups, the D1HT lookup bandwidth demands should be smaller than those of multi-hop DHTs for large systems.

\subsection{Tuning EDRA} \label{sec:tuning}

In this section, we will show how to tune EDRA to ensure that a given fraction 1-$f$ of the lookups will be solved with one hop, where $f$ can be statically defined (e.g., $f$=1\%) or dynamically adjusted.

As the lookups are solved with just one hop, to achieve $f$ it is enough to ensure that the hops will fail with probability $f$, at most. As discussed in Section \ref{sec:singlehopdhts}, we may assume that the lookup targets are random, as in many other studies (e.g., \cite{gupta09,li04comparing,superpeers03,chord03}). Then, the average fraction of routing failures will be a direct result of the number of stale routing table entries. In that manner, to satisfy $f$, it suffices to assure that the average fraction of stale routing table entries is kept below $f$ \cite{gupta09}.

Given that the average acknowledge time is at most $T_{avg}$, the average number of stale routing table entries will be bounded by the numbers of events occurred in the last $T_{avg}$ seconds, i.e., $T_{avg} \cdot r$. Then, we should satisfy the inequality $T_{avg} \cdot r / n \leq f$, and thus, by Equations \ref{equ:r} and \ref{equ:Tavg}, the maximum $\Theta$ value should be
\begin{equation} \label{equ:Theta}
 \Theta = (2 \cdot f \cdot S_{avg} - 2 \cdot \rho \cdot \delta_{avg})/(8+ \rho) ~ sec .
\end{equation}
The equation above requires each peer to know the average message delay; to ease the implementation, we will simply assume that $\delta_{avg}$=$\Theta/4$, which is an overestimation according to previously published results \cite{monneratD1HT,saroiu02}. Then
\begin{equation} \label{equ:ThetaN}
{\Theta = (4 \cdot f \cdot S_{avg})/(16+3 \cdot \rho) ~ sec }.
\end{equation}
As all D1HT peers know about any event in the system, Equations \ref{equ:r} and \ref{equ:ThetaN} allow each peer to dynamically calculate the optimal value for $\Theta$ based on the event rate that is observed locally, without the need for further communication or agreement with other peers. This allows each peer in a D1HT system to independently adapt to the environment dynamics to maximize the buffering period without penalizing latency, even for large real systems whose size and peer behavior typically change over time. In contrast, as discussed in Section \ref{sec:relatedwork}, peers in all other P2P single hop DHTs are unable to independently calculate the length of event buffering periods, even for hypothetical systems with fixed size and peer behavior.

To make D1HT more robust to sudden bursts of events, we extended the original D1HT analysis to allow each peer to overestimate the maximum number of events it may buffer ($E$) according to Equation \ref{equ:e} below. This equation was derived from Equation \ref{equ:ThetaN} with the assumption that peers in a D1HT system observe similar event rates (which led us to assume that $r$=$E/\Theta$).
\begin{equation} \label{equ:e}
{E = (8 \cdot f \cdot n)/(16+3 \cdot \rho) ~~~events }
\end{equation}

\subsection{Maintenance Traffic and Load Balance} \label{sec:loadbalanceandperformance}

While we have proven that EDRA ensures a good income load balance, it does not seem at first glance to provide good balance in terms of outgoing traffic. For instance, in Figure \ref{fig:hops}, peer $P$ sent four messages reporting $p$ crash, while $P_1$ did not send a single message. But we should not be concerned with the particular load that is generated by a single event, as it should not exceed a few bytes per peer. Nevertheless, we must guarantee good balance with respect to the aggregate traffic that is necessary to disseminate information about all events as they happen, and this outgoing maintenance load balance will rely on the random distribution properties of the hash function used. As discussed in Section \ref{sec:singlehopdhts}, the chosen function is expected to distribute the peer IDs randomly along the ring. Then, as in many other studies (e.g., \cite{gupta09,li04comparing,viceroy02,chord03}), we will assume that the events are oblivious to the peer IDs, leading to a randomly distributed event rate $r$. Thus, the average number of messages each peer sends per $\Theta$ interval will be (including message acknowledgments)

\begin{equation} \label{equ:income}
{( N_{msgs} \cdot ( v_m + v_a) + r \cdot m \cdot \Theta) /\Theta ~ \textrm{ bit/sec} }
\end{equation}
where $m$ is the number of bits to describe an event, and $v_m$ and $v_a$ are the bit overheads (i.e., headers) per maintenance message and per message acknowledgment, respectively. As no peer will exchange maintenance messages with any node outside $\mathbb{D}$, Equation \ref{equ:income} will reflect both the incoming and outgoing average maintenance traffic.

\subsection{Number of Messages}

Equation \ref{equ:income} requires us to determine the average number of messages a peer sends, which is exactly the purpose of the following theorem.
\begin{theorem} \label{theo:numberofmsgs} The set of peers $S$ for which a generic peer $p$ acknowledges events with $TTL \ge l$ satisfies $|S|$=$2^{\rho-1}$. \end{theorem}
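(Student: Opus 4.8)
The plan is to prove the sharper statement that pins down, for every admissible $TTL$ value $l'\in\{1,\dots,\rho\}$, exactly how many peers $e$ have the property that $p$ acknowledges the event originated at $e$ with $TTL=l'$, and then to sum those counts for $l'\ge l$; this refinement is also precisely what is needed to evaluate $N_{msgs}$ in Equation \ref{equ:income}, since by Rule 3 the number of maintenance messages a peer sends in a $\Theta$ interval is governed by the highest acknowledgement $TTL$ it collects. First I would fix a generic peer $p$ and an event $\varepsilon$ originated at a peer $e$; by Rule 6 its successor $s=succ(e,1)$ acknowledges $\varepsilon$ with $TTL=\rho$. Applying Theorem \ref{theo:correct} with $l=\rho$, $\varepsilon$ reaches every peer of $\mathbb{D}$ and is acknowledged exactly once, so there is a \emph{unique} sequence of maintenance messages $M(b_1),M(b_2),\dots,M(b_k)$ that delivers $\varepsilon$ from $s$ to $p$. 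Rule 3 forces the $TTL$s along this sequence to be strictly decreasing, $\rho>b_1>b_2>\dots>b_k\ge 0$; Rule 7 forces the $M(b_i)$-edge to go from its source to the $2^{b_i}$-th successor of that source; hence $p=succ(s,\,2^{b_1}+\dots+2^{b_k})$ and $p$ acknowledges $\varepsilon$ with $TTL$ equal to $b_k$, the smallest term of the sequence (the empty sequence, $k=0$, being the case $p=s$, acknowledged with $TTL=\rho$).

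Next I would invert this description, keeping $p$ fixed and letting $e$ vary. A strictly decreasing sequence $\rho>b_1>\dots>b_k\ge 0$ is the same data as a nonempty subset of $\{0,1,\dots,\rho-1\}$, and any such subset is realized by a legal message chain under Rules 1, 3, 4 and 7; moreover, since the binary expansion of $2^{b_1}+\dots+2^{b_k}$ is unique, a sequence determines its offset $d=\sum_i 2^{b_i}$ and hence its originator $e=pred(p,\,1+d)$ uniquely. Because $p$ acknowledges $e$'s event with $TTL=b_k=\min\{b_1,\dots,b_k\}$, the peers counted with $TTL=l'$ (for $1\le l'\le\rho-1$) are exactly those with offset $d=2^{l'}m$, $m$ odd and $d<2^{\rho}$, of which there are $2^{\rho-l'-1}$; adding the single peer $pred(p,1)$ whose event $p$ detects directly ($TTL=\rho$) and summing over $l'\ge l$ gives $|S|=\sum_{l'=l}^{\rho-1}2^{\rho-l'-1}+1=(2^{\rho-l}-1)+1=2^{\rho-l}$, i.e. $2^{\rho-1}$ for $l=1$. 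Feeding the same per-$TTL$ profile into the payload term of Equation \ref{equ:income} is then a routine sum and yields the $r\cdot m\cdot\Theta$ contribution claimed there.

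The main obstacle is the bookkeeping for Rule 8 and for wrap-around on the ring. When $n<2^{\rho}$, an offset $1+d$ can reach or exceed $n$, so $succ(s,d)$ coincides with a peer already contained in the $stretch$ blocked by Rule 8 (in the extreme case $d=2^{\rho}-1$ it is $e$ itself, which of course never acknowledges its own event), and that branch of the delivery tree is suppressed. One has to check carefully that the branches suppressed by Rule 8 are exactly the ones whose offsets wrap past the originator --- which is essentially the content of the ``$stretch(p,2^{\rho}-n)$'' clause already used in the proof of Theorem \ref{theo:correct} --- so that no acknowledgement is lost or double-counted elsewhere. With that in hand the count above is exact when $n=2^{\rho}$, and in general Rule 8 can only delete chains, never create acknowledgements, so $|S|\le 2^{\rho-l}$; this is the bound on which the maintenance-traffic analysis of Section \ref{sec:loadbalanceandperformance} actually relies, and it is attained in the worst case $n=2^{\rho}$.
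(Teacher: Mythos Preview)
Your argument is correct and arrives at $|S|=2^{\rho-l}$ (the theorem's exponent $\rho-1$ is a typo for $\rho-l$, as your closing remark recognizes), but the route is genuinely different from the paper's. The paper proceeds by a short downward induction on $l$ (equivalently, on $j=\rho-l$): for the base case $l=\rho$ only the predecessor's event is acknowledged with $TTL\ge\rho$; for the step, $S$ is split as $S_1\cup S_2$ where $S_2$ collects events acknowledged with $TTL>l$ and $S_1$ collects those with $TTL=l$, and the key observation is that $S_1$ is exactly the set of events that $k=pred(p,2^l)$ acknowledges with $TTL>l$, so both parts have size $2^{\rho-l-1}$ by the induction hypothesis, with disjointness supplied by Theorem~\ref{theo:correct}.

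Your approach instead \emph{unrolls} the delivery chain into the binary expansion of the offset $d$ from the detecting successor $s$ to $p$, identifies the acknowledgement $TTL$ with the least significant bit of $d$, and counts directly. This is slightly longer but buys two things: it yields the per-$TTL$ profile $|\{e:TTL=l'\}|=2^{\rho-l'-1}$ in one stroke (the paper extracts this only a posteriori from the theorem), and it is explicit about the boundary behaviour under Rule~8 when $n<2^{\rho}$, a point the paper's induction passes over silently. The paper's proof, on the other hand, is more economical and hews closer to the recursive structure of EDRA itself, never needing to name the full delivery path. Either argument suffices for the downstream use in Equations~\ref{equ:messageprobability}--\ref{equ:numberofmsgs}.
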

\begin{proof}By induction on $j$, where $j$=$\rho$-$l$. For $j$=0, Rule 2 ensures that there is no message with $TTL\ge l$=$\rho$. Then, the only events that $p$ acknowledges with $TTL \ge \rho$ are those related to its predecessor (Rule 6), so $S$=$\{pred(p,1)\}$, which leads to $|S|$=$1$=$2^0$=$2^{\rho-l}$.

For $j>0$, $l=\rho$-$j<\rho$. As $S$ is the set of peers for which $p$ acknowledges events with $TTL \ge l$, we can say that $S $=$ S1 \cup S2$, where $S1$ and $S2$ are the sets of peers for which $p$ acknowledges events with $TTL$=$l$ and $TTL>l$, respectively. By the induction hypothesis, $|S2|$=$2^{\rho-(l+1)}$. As $l<\rho$, the predecessor $p$ will not be in $S1$ (Rule 6). Thus, as Rule 7 implies that $p$ only receives messages with $TTL$=$l$ from a peer $k$, where $k$=$pred(p,2^l)$, we have that $S1$ will be the set of peers for which $k$ forwards events through messages with $TTL$=$l$. By Rule 3, $S1$ is the set of peers for which $k$ acknowledges events with $TTL>l$, and as the induction hypothesis also applies to the peer $k$, it follows that $|S1|$=$2^{\rho-(l+1)}$. By Theorem \ref{theo:correct}, we know that any peer $p$ acknowledges each event only once, ensuring that $S1$ and $S2$ are disjoint, and thus, $|S|$=$|S1|+|S2|$=$2^{\rho-(l+1)}+2^{\rho-(l+1)}$=$2^{\rho-l}$.
\end{proof}

\vspace{5pt}

The EDRA Rules 3 and 4 ensure that a peer $p$ will only send a message with $TTL=l>0$ if it acknowledges at least one event with $TTL \ge l+1$. Then, based on Theorem \ref{theo:numberofmsgs}, we can state that $p$ will only send a message with $TTL=l>0$ if at least one in a set of $2^{\rho-l-1}$ peers suffers an event. As the probability of a generic peer suffering an event in a $\Theta$ interval is $\Theta \cdot r/n$, the probability $P(l)$ of a generic peer sending a message with $TTL=l \ne 0$ at the end of each $\Theta$ interval is
\vspace{-5pt}
{\begin{equation} \label{equ:messageprobability}
{ P(l) = 1- (1 - 2 \cdot r \cdot \Theta / n)^k, \textrm{ where~} k=2^{\rho-l-1} }.
\end{equation} }
\noindent As the messages with $TTL$=0 are always sent, the average number of messages sent by each peer per $\Theta$ interval will be
\begin{equation} \label{equ:numberofmsgs}
{ N_{msgs} = 1 + \sum_{l=1}^{\rho-1} P(l) }.
\end{equation}

\section{Quarantine} \label{sec:quarantine}

In any DHT system, peer joins are costly, as the joining peer has to collect information about its keys and the IP addresses to fill in its routing table, and this joining overhead may be useless if the peer departs quickly from the system. While ideally all peers in a DHT system should be able to solve lookups with a single hop at any time, in extremely large and dynamic systems the overheads caused by the most volatile peers can be excessive. Moreover, P2P measurement studies \cite{chu02,saroiu02,steiner09} have shown that the statistical distributions of session lengths are usually heavy tailed, which means that peers that have been connected to the system for a long time are likely to remain alive longer than newly arrived peers. To address those issues, we proposed a \emph{Quarantine} mechanism, in which a joining peer will not be immediately allowed to take part in the D1HT overlay network, but it will be able to perform lookups at any moment. In this way, the most volatile peers will cause insignificant overheads to the system, while the other peers will be able to solve lookups with just one hop most of the time (typically, during more than 95\% of their session lengths).

To join a D1HT system, a peer $p$ retrieves the keys and IP addresses from a set of peers $\mathbb{S}$ (which can include just one peer, e.g., the successor of the joining peer). With Quarantine, the peers in $\mathbb{S}$ will wait for a Quarantine period $T_q$ (which can be fixed or dynamically tuned) before sending the keys and IP addresses to $p$, postponing its insertion into the D1HT ring. While $p$ is in Quarantine, its join will not be reported, and it will not be responsible for any key. The overhead reductions attained can be analytically quantified based on the Quarantine period and the statistical distribution of the session lengths, as in a system with $n$ peers, only the $q$ peers with sessions longer than $T_q$ will effectively take part of the overlay network and have their events reported.

To be able to perform lookups during its Quarantine, a quarantined peer $p$ will choose the nearest (in terms of latency) and best provisioned peers from $\mathbb{S}$ and will forward its lookups to those \emph{gateway} peers. To avoid excessive loads, each gateway peer may limit the rate of lookups it will solve on behalf of quarantined peers, even though the experimental results that we will show in Section \ref{sec:exp_hpc_band}, where each D1HT peer used less than 0.1\% of the available CPU cycles, indicate that the load imposed on the gateway peers should not be high. Anyway, this extra load should be much inferior than those handled by superpeers (or supernodes) in hierarchical systems like FastTrack \cite{liang06}, OneHop \cite{gupta09} or Structured Superpeers \cite{superpeers03}.

With the Quarantine mechanism, we avoid the join and leave overheads for peers with session lengths smaller than $T_q$, but newly incoming peers will have their lookups solved in two hops while they are in Quarantine. We believe that this extra hop penalty should be acceptable for several reasons. First, the additional hop should have low latency, as it will be addressed to a nearby peer. Second, this extra overhead will only be necessary during a short period (e.g., less than 6\% of the average session length). Third, the Quarantine mechanism should have beneficial effects even to the volatile and gateway peers, as they will not incur the overhead of transferring the keys and routing tables. Fourth, the Quarantine mechanism should significantly reduce the maintenance overheads of all peers in the system (as will be confirmed by the results presented in Section \ref{sec:anresults}).

Furthermore, the Quarantine mechanism can also be used for other purposes. For instance, Quarantine can be used to minimize sudden overheads due to flash crowds, by increasing $T_q$ whenever the event rate reaches the upper limit that can be comfortably handled by the system.

\section{D1HT Implementation} \label{sec:d1ht-implementation}

We implemented D1HT from scratch, resulting in more than 8.000 lines of dense C++ code, even though we did not yet implement the Quarantine mechanism. This implementation is fully functional and was tested on thousands of nodes running Linux, and its source code is freely available \cite{D1HTsource}.

Our implementation uses a variant of the Chord joining protocol \cite{chord03}, with a few important differences. First, any join is announced to the whole system by EDRA. Second, the new peer $p$ gets the routing table from its successor $p_s$. Third, to prevent $p$ from missing events while its joining is notified to the system, $p_s$ will forward to $p$ any event it knows until $p$ receives messages with all different TTLs.

\begin{figure*}[!t]
\centering
%\hspace{-1cm}
\includegraphics[width=10.8cm]{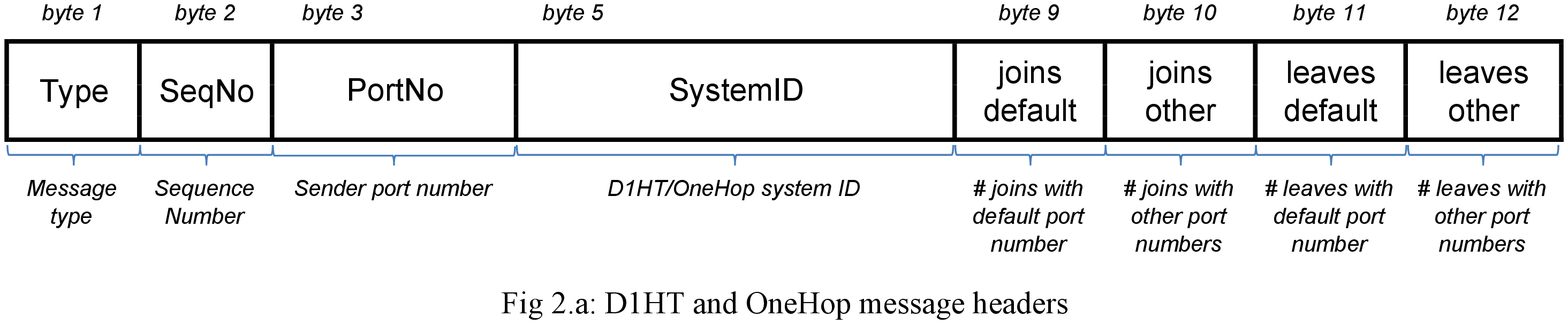}
%\includegraphics[width=10.8cm]{cabecalho_msg_D1HT}
%\hspace{-1pt}
\\
\vspace{-4.6cm}
%\hspace{-1.6cm}
\includegraphics[width=14.2cm]{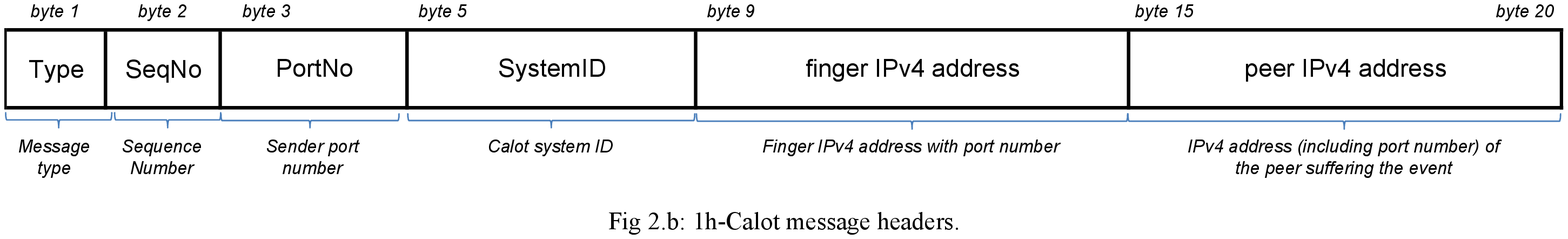}
%\hspace{10cm}
\vspace{-8.0cm}
\\
\caption{Message headers used in our implementations and analyses. The SeqNo field is necessary to assure message delivery over UDP, and the SystemID field allows any peer to discard unsolicited messages received from other DHT systems. Each 1h-Calot maintenance message has a fixed size of 48 bytes ($v_c$=384 bits, including 28 bytes for the IPv4 and UDP headers). Each D1HT and OneHop message has a fixed part with 40 bytes ($v_m$=320 bits, including IPv4 and UDP headers), followed by the IPv4 addresses (without port numbers) of the peers that have joined and left the system in the default port ($m$=32 bits), and the IPv4 addresses (with port numbers) of the peers that have joined and left the system using others ports ($m$=48 bits). All acknowledgment  and heartbeat messages for the three systems have just the four first fields shown (Type, SeqNo, PortNo and SystemID), and so $v_a$=$v_h$=288 bits (including IPv4 and UDP headers).}
\label{fig:headers}
\vspace{-10pt}
\end{figure*} 

To save bandwidth and minimize latency, the maintenance and lookup messages are sent with UDP, and TCP is used for all other types of communications (routing table transfers, stabilizations, etc.). Each D1HT instance has a default IPv4 port, but any peer may choose an alternative port when joining the system. Thus, we expect that most events will be identified only by the peer's four byte IPv4 address (as most peers should use the default port), which led us to propose the message header layout as shown in Figure \ref{fig:headers}. Then, for Equation \ref{equ:income}, we expect that the average $m$ value will be around 32 bits.

Each D1HT peer stores its routing table as a local hash table indexed by the peer IDs in such a way that any peer needs only to store the IPv4 addresses of the participant peers (including the port number), leading to a memory overhead of about 6$n$ bytes in each peer (plus some additional space to treat eventual hash collisions). In this way, for environments such as HPC and ISP datacenters, each routing table will require a few hundred KBs at most. For a huge one million Internet wide D1HT deployment, each routing table would require around 6 MB, which is negligible for domestic PCs and acceptable even for small devices, such as cell phones and media players.

\section{Experimental Evaluation} \label{sec:expresults}

In this section, we will present our D1HT and 1h-Calot experimental results, which will be complemented by our analytical evaluations presented in Section \ref{sec:anresults}.

It is worth noting the extensive experimental results we present in this section. First, we used two radically distinct environments, specifically an HPC datacenter and a worldwide dispersed network. Second, our DHT comparison used the largest experimental testbed set up so far, with up to 4,000 peers and 2,000 physical nodes. Finally, we report the first latency comparison among DHTs and a directory server.

\subsection{Methodology} \label{sec:expmethodology}

The D1HT implementation used in our experiments was presented in Section \ref{sec:d1ht-implementation}, which includes only the base D1HT proposal without any extension. In this way, we should better evaluate the unique D1HT contributions, but we expect that our experimental results will reflect a worst case scenario in relation to production grade and better tuned D1HT implementations, which would probably include a number of well known optimizations (e.g., peer re-an\-nounce\-ments, parallel lookups, etc.), even though our implementation has been already thoroughly tested.

Because 1h-Calot was not implemented by its authors, we had to develop a real implementation of that system for our experiments. To allow for a fair comparison, we implemented 1h-Calot after our D1HT code, and both systems share most of the code, in an effort to ensure that differences in the results are not due to implementation issues. Because each 1h-Calot maintenance message carries just one event, it does not make sense to include counters in its message headers, which will then have the format shown in Figure \ref{fig:headers}.

In 1h-Calot each event incurs 2$n$ maintenance messages (including acks), and each peer sends four heartbeats per minute (which are not acknowledged), and so the analytical average 1h-Calot peer maintenance bandwidth will be given by

\vspace{-15pt}
{\begin{equation} \label{equ:income-1h-calot}
{B_{Calot} = ( r \cdot (v_c + v_a) + 4 \cdot n \cdot v_h / 60) ~ \textrm{ bps}
}\end{equation}}
\vspace{-12pt}

\noindent where $v_c$, $v_a$ and $v_h$ are the sizes of the maintenance, acknowledgment and heartbeat messages, respectively (as shown in Figure \ref{fig:headers}).

Each experiment evaluated both systems with a specific session length $S_{avg}$ and a given network size $n$. In all experiments, we used $S_{avg}$=174 min, as this value is representative of Gnutella \cite{saroiu02} and it was used in other DHT studies (e.g., \cite{gupta09,monneratD1HT}). In some experiments, we also used $S_{avg}$=60 min to exercise the systems under more dynamic scenarios. The bandwidth results considered only the traffic for routing table maintenance and peer failure detection, as the other overheads involved, such as lookup traffic and routing table transfers, should be the same for all single-hop DHTs. For all experiments, we defined the routing tables with 6K entries (around 36KB).

Each experiment had two phases, where the first one was used to grow the system up to the target size and the second phase was used for the measurements. In the first phase, each system started with just eight peers, and one peer joined per second until the target size was reached, resulting in a steep growth rate (the systems doubled in size in just eight seconds, with an eightfold growth in less than one minute), which should stress the joining protocols. The second phase always lasted for 30 min, while each peer performed random lookups. We ran each experiment three times and reported the average results.

In both phases of all the experiments, the systems were churned according to Equation \ref{equ:r} and the chosen $S_{avg}$ (60 or 174 min), and the peer leaves were random. Half of the peer leaves were forced with a POSIX \texttt{SIGKILL} signal, which does not allow the leaving peer to warn its neighbors nor to flush any buffered event. To maintain the system size, any leaving peer rejoined the system in three minutes with, unless otherwise stated, the same IP and ID, which allowed us to evaluate both systems in a scenario with concurrent joins and leaves.

Even though our experiments stressed the joining protocols and imposed a high rate of concurrent joins and leaves, both D1HT and 1h-Calot were able to solve more than 99\% of the lookups with a single hop in all experiments.

\subsection{PlanetLab Bandwidth Experiments} \label{sec:exp_planetlab}

\begin{figure*}[!b]
\vspace{-13pt}
\centering
\includegraphics[width=9cm]{}
\caption{Experimental and analytical outgoing maintenance bandwidth demands for D1HT and 1h-Calot in the PlanetLab for $S_{avg}$ = 174 min.}
\label{fig:planetlab}
\vspace{-13pt}
\end{figure*}

To evaluate the system overheads in a worldwide dispersed environment, we ran experiments using 200 physical PlanetLab \cite{bavier04} nodes, with either 5 or 10 D1HT and 1h-Calot peers per node, leading to system sizes of 1K or 2K peers, respectively. Each peer performed one random lookup per second during the second phase of our PlanetLab experiments.

Figure \ref{fig:planetlab} shows the sum of the outgoing maintenance bandwidth requirements of all peers for each system. The figure shows that both DHTs had similar overheads for the smaller system size, while with 2K peers the demands of 1h-Calot were 46\% higher than those of D1HT. The more extensive experiments and analyses presented in Sections \ref{sec:exp_hpc_band} and \ref{sec:anresults} will show that this difference will significantly increase with larger system sizes. Figure \ref{fig:planetlab} also shows that the analyses of both systems were able to predict their bandwidth demands, which differ to some extent from previous results \cite{monnerat09}, where the D1HT analysis overestimated its overheads by up to 25\%. We credit those differences to a few factors. First, D1HT had an increase in bandwidth demands due to the implementation of the mechanisms to close $\Theta$ intervals based on Equation \ref{equ:e}, which was not used in the experiments reported in \cite{monnerat09}. Additionally, because the D1HT analysis is strongly dependent on $\rho$=$\lceil\log_2(n)\rceil$, it leads to more precise predictions when $\rho$ is close to $\log_2(n)$ (i.e., when $n$ is slightly smaller than a power of 2, which is the case with all experiments presented here) and to overestimated results when $\rho$ is significantly greater than $\log_2(n)$ (as with the results presented in \cite{monnerat09}).
\begin{table*}[!tb]
\centering
%\fontsize{11}{14}\selectfont
%\begin{tabular*}{8.32cm}{|c|c|c|c|}
\begin{tabular*}{9.17cm}{|c|c|c|c|}
\hline

\textbf{Cluster} & \textbf{\# nodes} & \textbf{CPU} & \textbf{OS}\\

 \hline

 A & 731 &Intel Xeon 3.06GHz single core&Linux 2.6\\
 \hline

 B & 924 &AMD Opteron 270 dual core&Linux 2.6\\
 \hline

 C & 128 &AMD Opteron 244 dual core&Linux 2.6\\
 \hline

 D & 99 &AMD Opteron 250 dual core&Linux 2.6\\
 \hline

 F & 509 &Intel Xeon E5470 quad core&Linux 2.6\\
 \hline

\end{tabular*}
\caption{Clusters used in our experiments. Each node has two CPUs.}
\label{tab:clusters}
\vspace{-20pt}
\end{table*}
%\vspace{10pt} 

\begin{figure*}[!b]
\vspace{-14pt}
\centering
\hspace{-62pt}
\subfigure[$S_{avg}$ = 60 min.]{
  \label{fig:exp-band-60}
  \includegraphics[width=7.40cm]{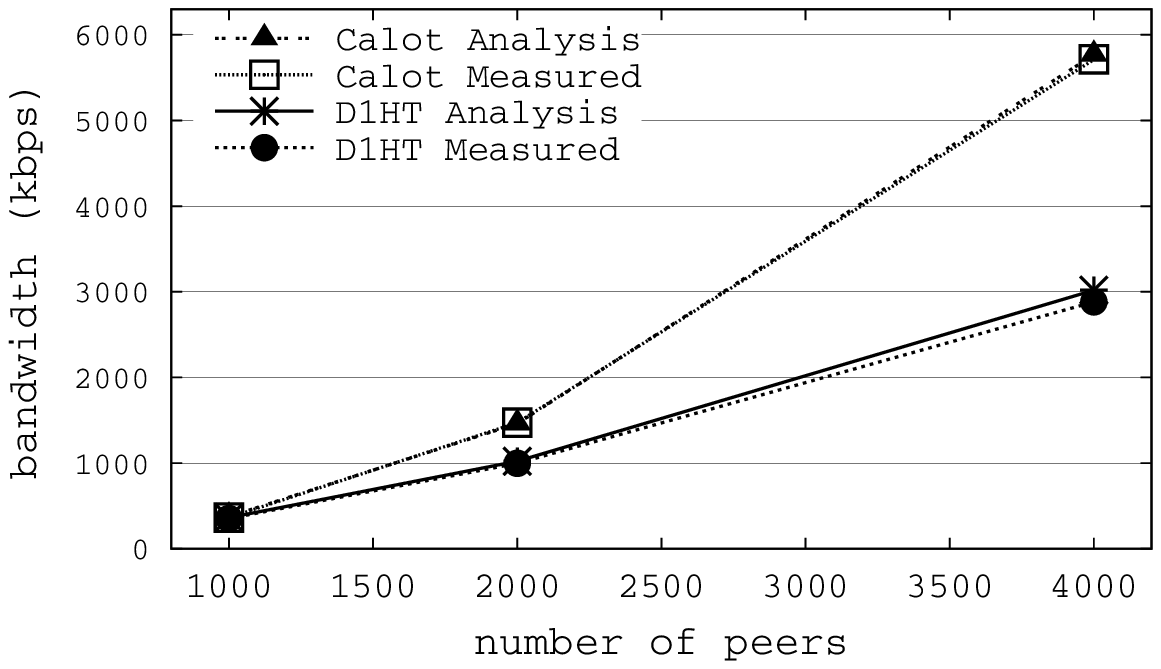}
  }
\subfigure[$S_{avg}$ = 174 min.]{
  \label{fig:exp-band-174}
  \includegraphics[width=7.30cm]{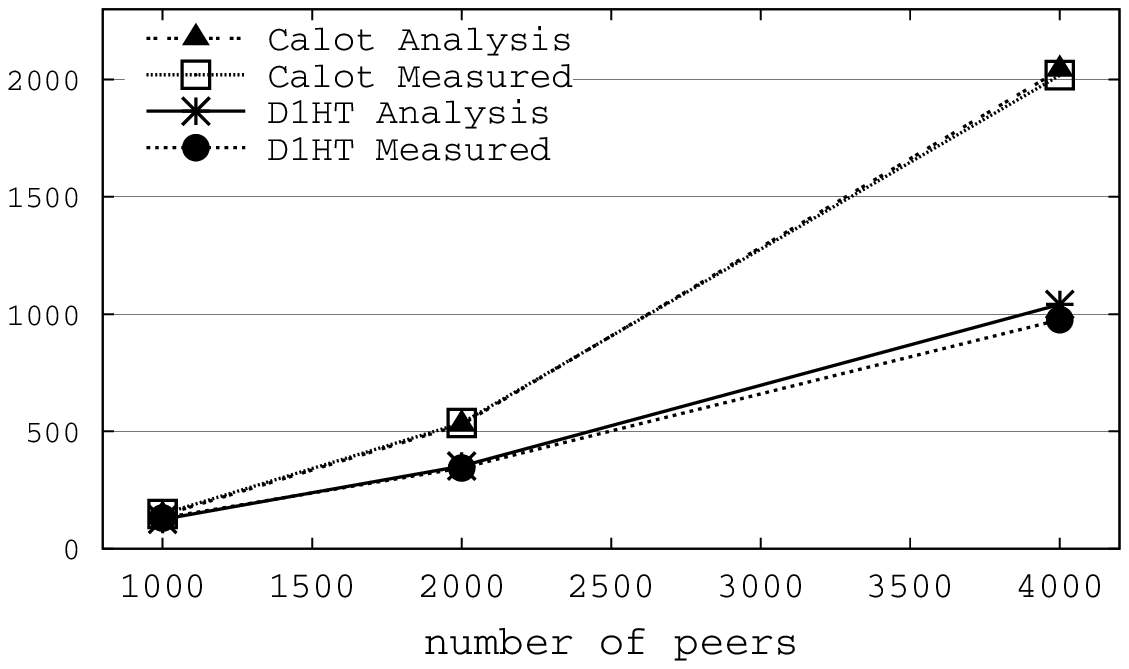}
  }
\hspace{-57pt}
\caption{Experimental and analytical outgoing maintenance bandwidth demands for D1HT and 1h-Calot in the HPC datacenter.}
\label{fig:exp-band-hpc}
\end{figure*}

\subsection{HPC Bandwidth Experiments} \label{sec:exp_hpc_band}

We also performed experiments on a subset of five clusters at a Seismic Processing HPC datacenter \cite{panetta07} (see Table \ref{tab:clusters}). In that network, each node has a Gigabit Ethernet connection to an edge switch, while each edge switch concentrates 16 to 48 nodes and has a 2 Gbps or 10 Gbps Ethernet connection to a non-blocking core switch.

Each peer performed one random lookup per second during the second phase of these experiments, which were conducted with the clusters under normal datacenter production, where typically most of the nodes were experiencing 100\% CPU use, as imposed by the Seismic Processing parallel jobs. Nevertheless, we were able to run all our experiments smoothly, without any single interference in the normal datacenter production, confirming that it is absolutely feasible to run these DHT systems in heavily loaded production environments. In fact, in all our HPC bandwidth experiments the average CPU use per D1HT peer was below 0.1\% (including the cycles used by the joining mechanism and the lookups).

Figures \ref{fig:exp-band-60} and \ref{fig:exp-band-174} show the sum of the outgoing maintenance bandwidth requirements of all peers for each system for different churn rates. We plotted the measured and analytical requirements, showing that, as in the PlanetLab results, the analyses for both systems were precise. The figures also show that D1HT had lower maintenance bandwidth requirements for all cases studied, once more confirming in practice that D1HT can provide a more lightweight DHT implementation than 1h-Calot. The analytical results that will present in Section \ref{sec:anresults} show that the difference in favor of D1HT will grow to more than one order of magnitude for bigger systems.

We also ran our biggest case (4,000 peers) with the leaving peers rejoining with new IPs and IDs to evaluate whether the reuse of IDs caused any relevant bias in our results. In fact, without reusing the IDs, the fraction of the lookups solved with one hop dropped by less than 0.1\%, but it remained well above our 99\% target, which allowed us to conclude that the reuse of IDs did not have any significant effect in our results.

\pagebreak

\subsection{HPC Latency Experiments} \label{sec:exp_hpc_latency}

In this section, we will present our latency experiments performed in the HPC datacenter. As the lookup latencies are sensitive to the network load, we used 400 idle nodes from Cluster A (see Table \ref{tab:clusters}) but, as we expect that DHTs should be able to be used in heavy loaded production environments, we measured the latencies with those nodes both in the idle state and under 100\% CPU load (by running two burnP6 \cite{cpuburn} instances per node). Because we used dedicated nodes, we could increase the lookup rate during the second phase of the experiments to 30 lookups/sec per peer, which allowed us to evaluate the systems under an intense lookup load.

In addition to D1HT and 1h-Calot, we also ran a multi-hop DHT (Pastry) and a directory server (Dserver). In an effort to avoid inserting bias due to implementation issues, Dserver was essentially a D1HT system with just one peer. We first ran Dserver in a Cluster B node, which reached 100\% CPU load when serving lookups from 1,600 peers, thus providing a first indication of the scalability issues of this client/server approach, after which we picked up a dedicated node from Cluster F. For the multi-hop DHT, we used Chimera \cite{chimera} not only because it implements Pastry \cite{pastry01} (using base 4), which is one of the most prominent multi-hop DHTs, but also because it did not require any prerequisites to be installed in the HPC clusters (e.g., Java, Python, etc.).

As our time windows with the dedicated cluster were limited, we ran the four systems concurrently in each experiment. To study different system sizes, we varied the numbers of DHT peers and Dserver clients per node from two to ten. For example, when running six peers per node, we concurrently ran six D1HT peers, six 1h-Calot peers, six Chimera peers and six Dserver clients in each cluster A node.

The D1HT and 1h-Calot peers were churned with $S_{avg}$=174 min, while Dserver and Chimera were not churned. To verify whether the base latencies of the studied systems differ due to implementation issues, we first ran the four systems with just two peers and the observed one-hop latencies were quite similar (around 0.14 ms).

\begin{figure*}[!t]
\centering
\hspace{-62pt}
\subfigure[Idle nodes.]{
  \label{fig:exp-latency-burn0}
  \includegraphics[width=7.40cm]{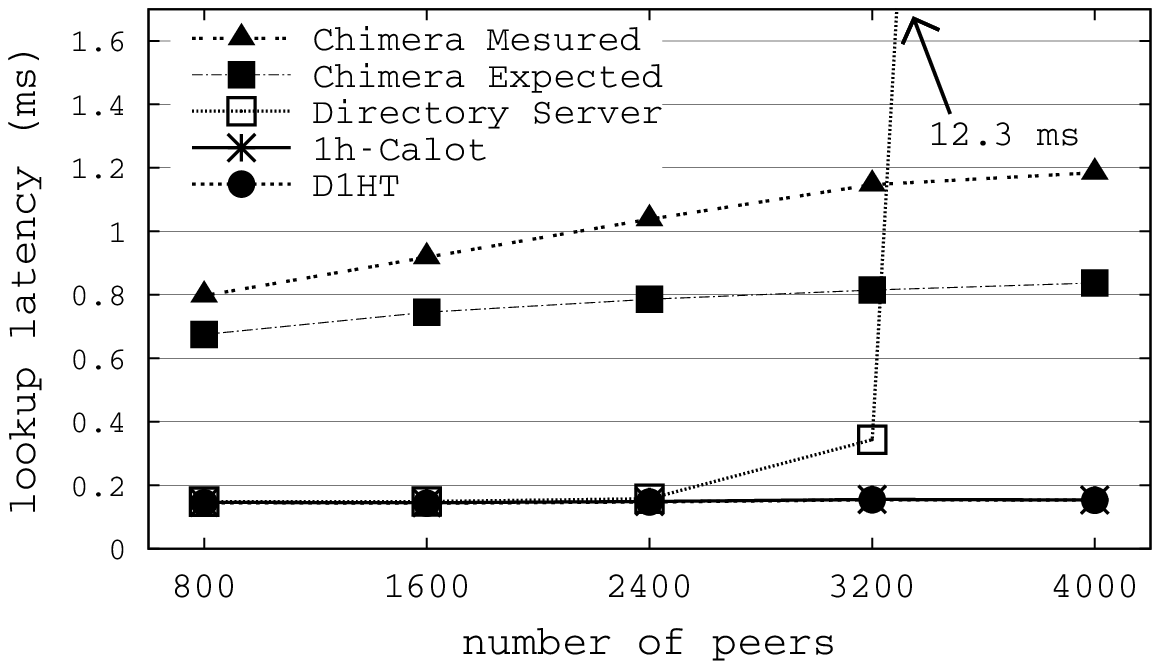}
  }
\subfigure[Nodes under 100\% CPU use.]{
  \label{fig:exp-latency-burn2}
  \includegraphics[width=7.30cm]{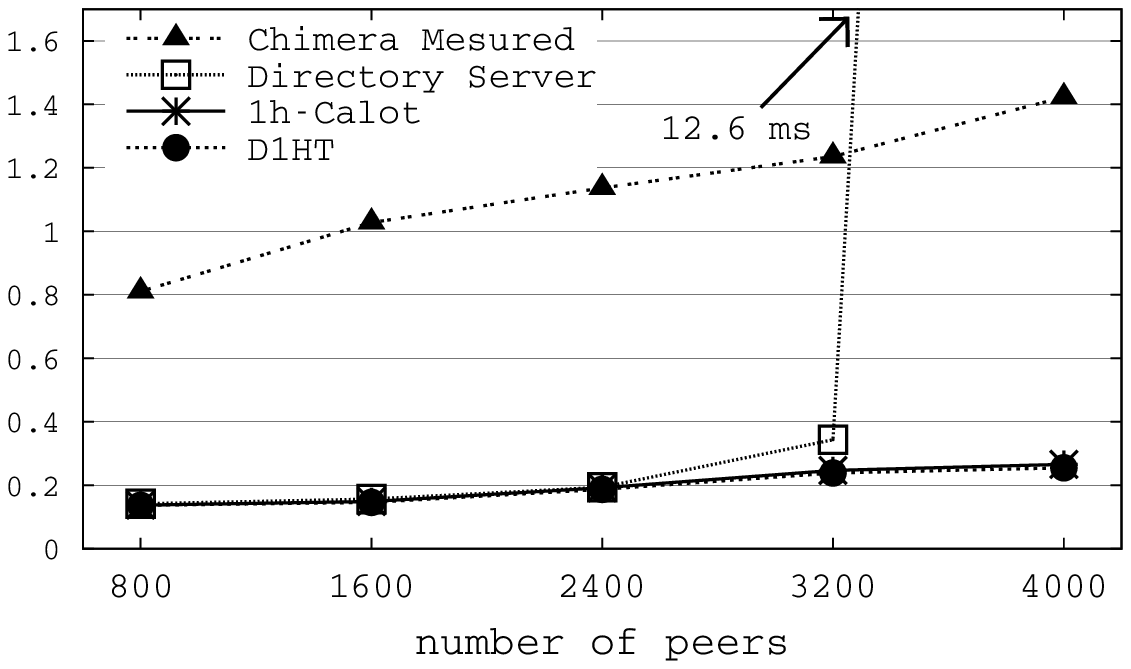}
  }
\hspace{-57pt}
\caption{Lookup latencies measured in the HPC environment with idle and busy (100\% CPU use) nodes.}
\label{fig:exp-latency}
\end{figure*}

Figure \ref{fig:exp-latency-burn0} and \ref{fig:exp-latency-burn2} show the latencies measured with the nodes in the idle state and under 100\% CPU load, respectively. As the measured Chimera latencies were higher than expected, we also plotted the expected Chimera latencies assuming that each hop takes 0.14 ms. We believe that the differences between the measured and expected Chimera latencies were due to either implementation issues or measurement artifacts, but even the expected latencies are much higher than those for the single-hop DHTs, which confirms that a multi-hop DHT solution is less suitable for latency-sensitive applications. While Chimera latencies could be improved by using a larger base (e.g., 16), its performance would still be worse than that of D1HT.

We can see from Figure \ref{fig:exp-latency-burn0} that all systems, except for Chimera, presented very similar latencies with idle nodes and smaller system sizes, which was expected because D1HT and 1h-Calot solved more than 99\% of the lookups with one hop, while Dserver ran similar code. However, Dserver started to lag behind the single-hop DHT systems at 3,200 peers (120\% higher latencies), and at 4,000 peers it provided latencies more than one order of magnitude higher, revealing its scalability limitations.

\begin{figure*}[!t]
\centering
\includegraphics[width=10cm]{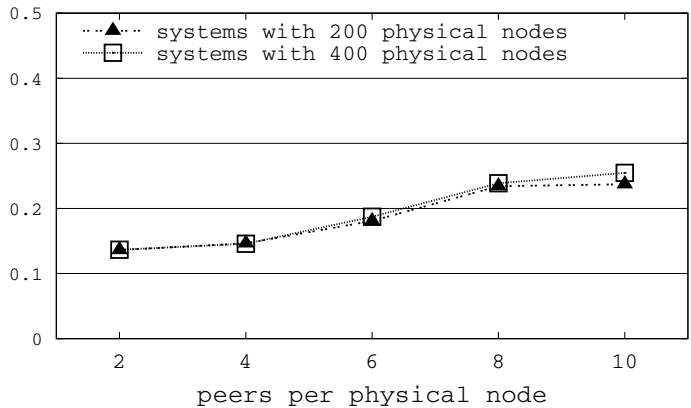}
\caption{Lookup latencies measured in the HPC environment with busy (100\% CPU use) nodes.}
 \label{fig:exp-latency-200x400}
\end{figure*}

We may observe in Figure \ref{fig:exp-latency-burn2} that the latencies of all systems degraded with busy nodes and that, quite surprisingly, the D1HT and 1h-Calot latencies increased slightly with the system size when running on busy nodes. To verify whether this unexpected behavior was related to the extra load generated by the artifact of running several peers and four different systems per node under an intense lookup rate and full CPU load, we ran the 100\% CPU load experiments with just 200 physical nodes, varying again the number of peers per node from two to ten. The results are plotted in Figure \ref{fig:exp-latency-200x400} along with the latencies measured with 400 nodes. For simplicity, in Figure \ref{fig:exp-latency-200x400}, we only plot the D1HT results, even though we also ran 1h-Calot, Chimera and Dserver for both the 200 and 400 node experiments. Confirming our hypothesis, the figure indicates that the latency degradation observed was related to the number of peers per physical node (and the overload they imposed on the already 100\% busy nodes), as the latencies measured with 200 and 400 nodes and the same number of peers per node were quite similar, even though the 400-node systems had twice the size. For instance, with four peers per node, the average latencies measured with 200 nodes (total of 800 peers) and 400 nodes (total of 1,600 peers) were both 0.15 ms. With eight peers per node, the results with 200 nodes (total of 1,600 peers) and 400 nodes (total of 3,200 peers) were 0.23 ms and 0.24 ms, respectively. These results indicate that the D1HT lookup latencies should not vary with the system size, but they can degrade with overloaded peers, while they are still similar to or better than those provided by Dserver and Chimera.

\section{Analytical Results} \label{sec:anresults}

As our experiments have validated the 1h-Calot and D1HT analyses, and the OneHop analysis had already been validated in a previous work \cite{gupta09}, we will now compare those three systems analytically. As discussed in Section \ref{sec:relatedwork}, the 1h-Calot results presented in this section should also be valid for the 1HS \cite{risson06a} and SFDHT \cite{sfdht09} systems. In a previous work \cite{monneratD1HT}, we have already provided an extended D1HT analysis, studying the variation of the D1HT overheads and $\Theta $ intervals for different values of $f$, churn rates and system sizes; thus, here we will focus on comparing the overheads of the systems being studied.

As in our experiments, our analytical results compute only the traffic for routing table maintenance, we used $f$=1\%, and we assumed random events and lookups. The OneHop analysis is available from \cite{gupta09}, for which we will consider the same message formats used in our D1HT implementation, as shown in Figure \ref{fig:headers}, because they have been shown to be realistic in practice. Besides, the OneHop results always considered the optimal topological parameters and did not account for the failure of slice and unit leaders. The OneHop and 1h-Calot results do not consider message delays, while for D1HT we used $\delta_{avg}$=0.25 sec, which is an overestimation compared to the Internet delay measurements presented in \cite{saroiu02}.

\begin{figure*}[!t]
\centering
\hspace{-10pt}
\subfigure[$S_{avg}$=60 min.]{
  \label{fig:an-band-60}
  \includegraphics[width=7.15cm]{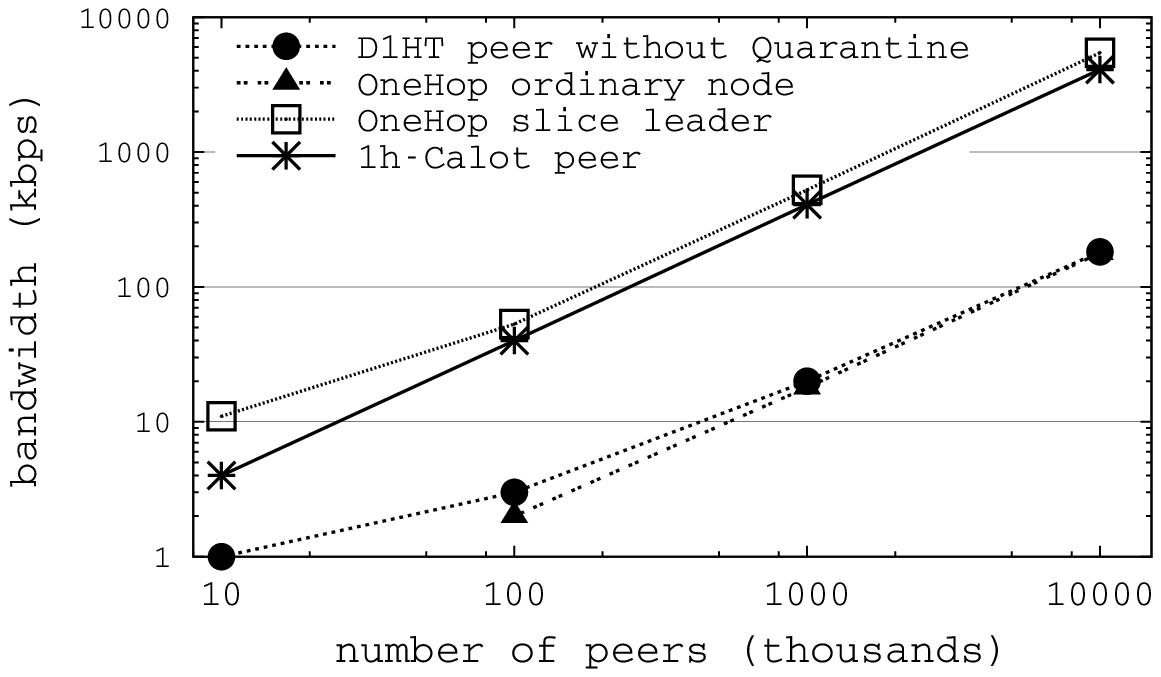}
  }
\hspace{-5pt}
\subfigure[$S_{avg}$=169 min (KAD dynamics).]{
  \label{fig:an-band-169}
  \includegraphics[width=7.1cm]{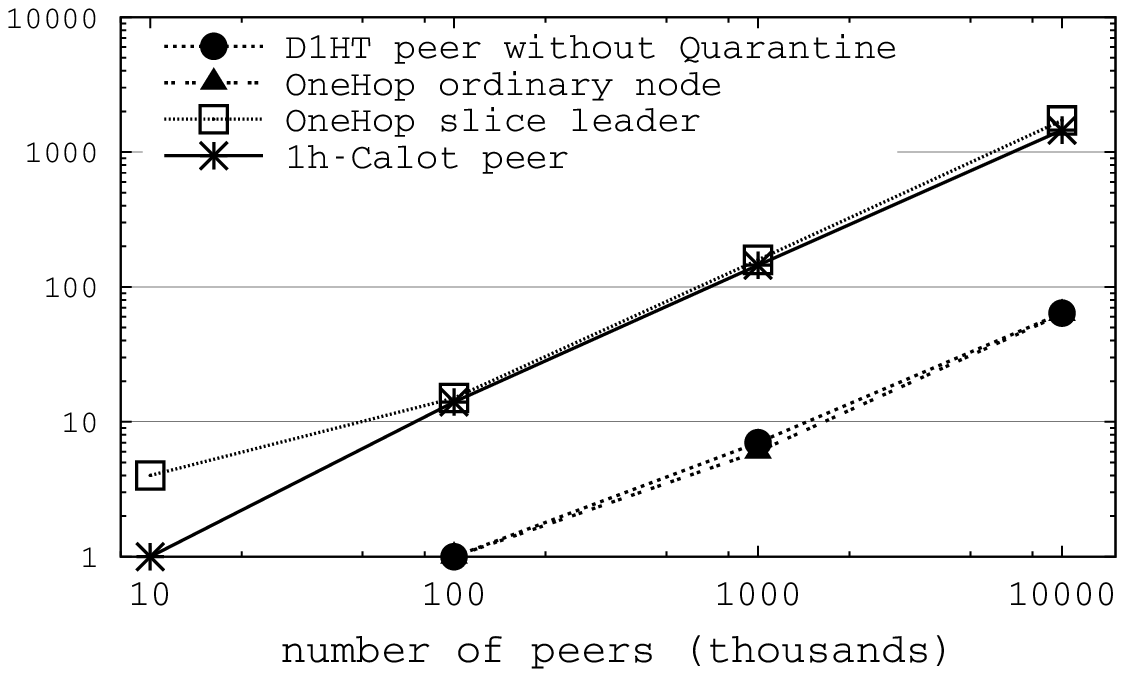}
  }
\\
\hspace{-10pt}
\subfigure[$S_{avg}$=174 min (Gnutella dynamics).]{
  \label{fig:an-band-174}
  \includegraphics[width=7.15cm]{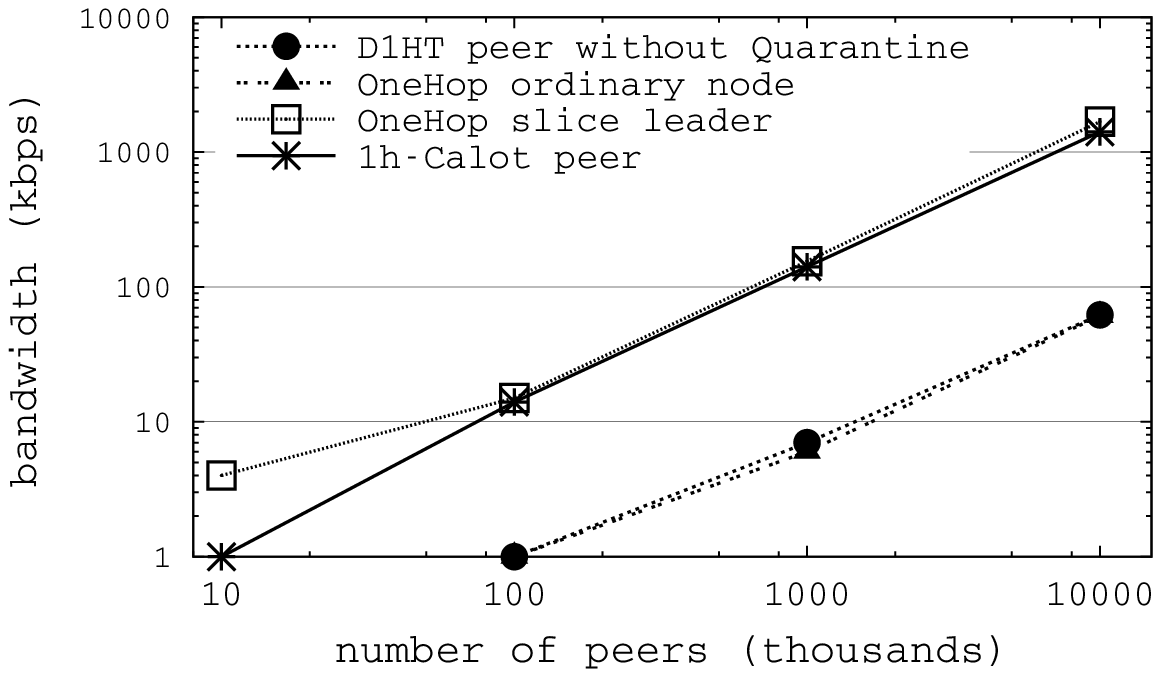}
  }
\hspace{-5pt}
\subfigure[$S_{avg}$=780 min (BitTorrent dynamics).]{
  \label{fig:an-band-780}
  \includegraphics[width=7.1cm]{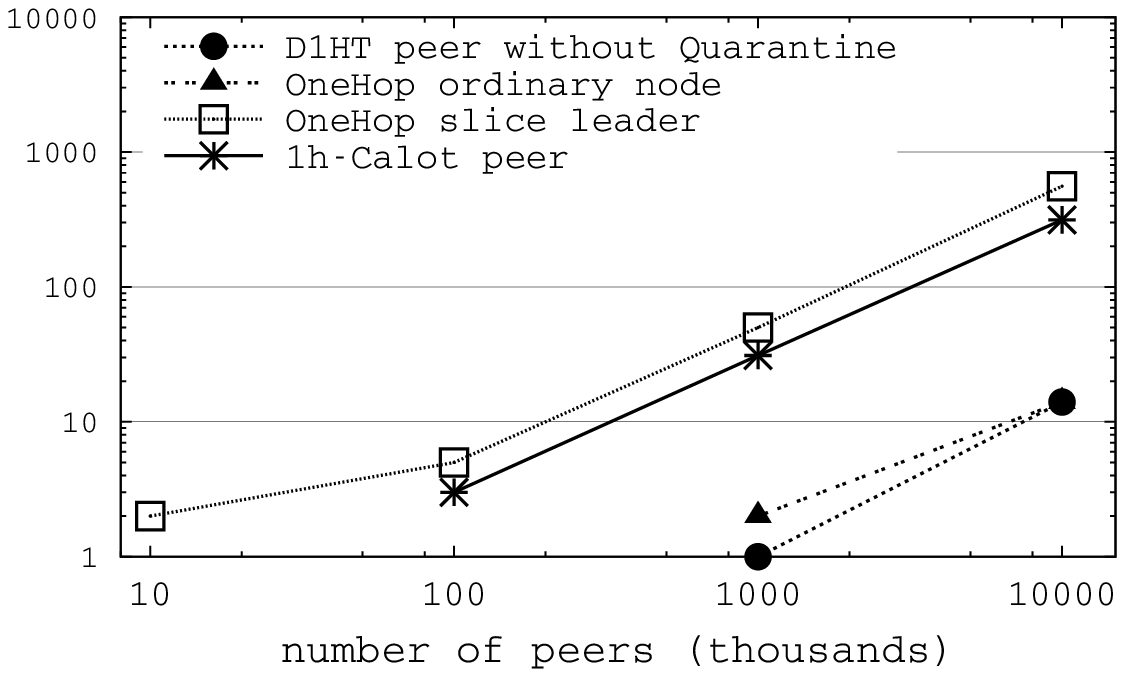}
  }
\vspace{-8pt}
\caption{Log-log plots showing the analytical outgoing maintenance bandwidth demands for D1HT, 1h-Calot and OneHop (we do not show values below 1 kbps).}
\label{fig:an-band}
\vspace{-18pt}
\end{figure*}

We varied the system size from $10^4$ to $10^7$, which are representative of environments ranging from large corporate datacenters to huge Internet applications, and studied average sessions of 60, 169, 174 and 780 min, where the latter three were observed in KAD \cite{steiner09}, Gnutella \cite{saroiu02} and BitTorrent \cite{andrade05} studies. This range of session lengths is more comprehensive than those used in most DHT evaluations (e.g., \cite{gupta09,li04comparing,monneratD1HT,monnerat09,accordion:nsdi05,superpeers03}) and is representative of widely deployed P2P applications.

\pagebreak
\begin{figure*}[!b]
\vspace{-18pt}
\centering
\hspace{-50pt}
\subfigure[Quarantine gains with KAD dynamics ($q$=0.76$n$).]{
  \label{fig:quarantine-169}
  \includegraphics[width=7.0cm]{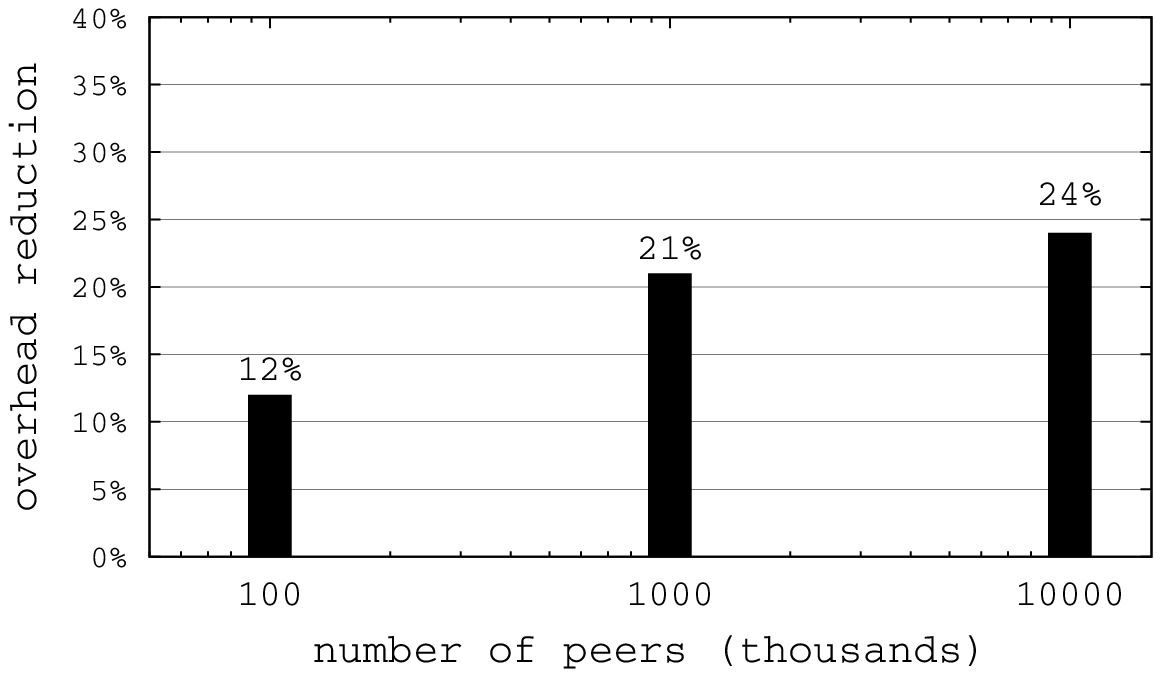}
  }
\subfigure[Quarantine gains with Gnutella dynamics ($q$=0.69$n$).]{
  \label{fig:quarantine-174}
  \includegraphics[width=6.9cm]{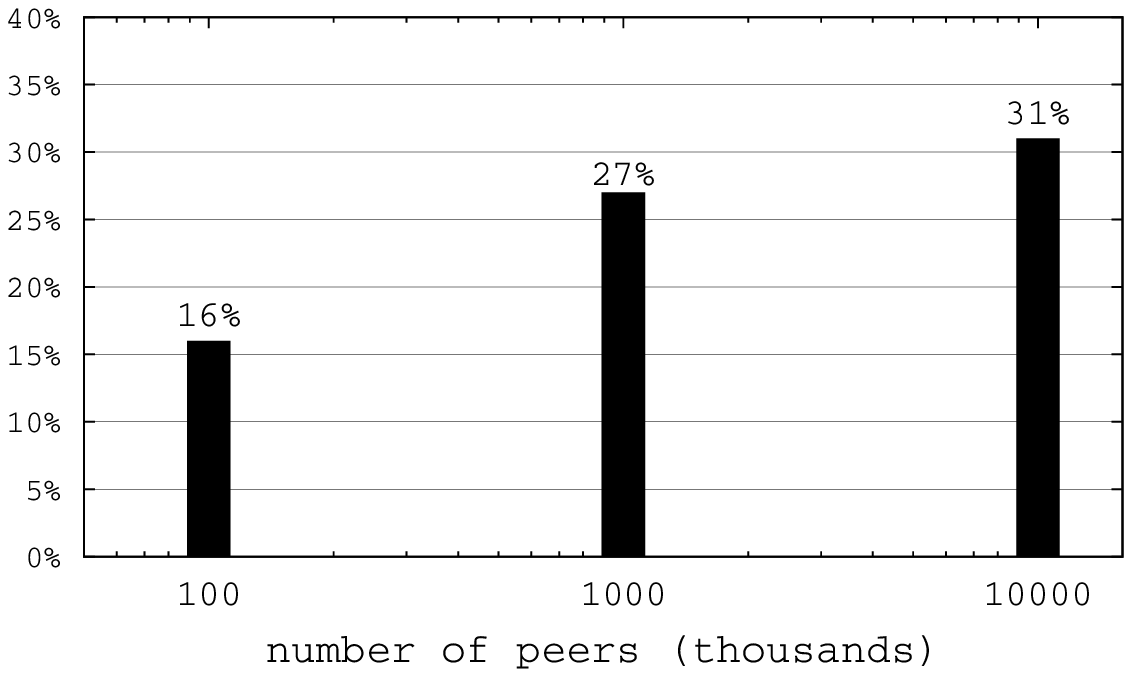}
  }
\hspace{-32pt}
\caption{Estimated overhead reductions brought by Quarantine for systems with KAD and Gnutella behaviors.}
\label{fig:quarantine}
\vspace{-8pt}
\end{figure*}
 
Figures \ref{fig:an-band-60} to \ref{fig:an-band-780} show log-log plots comparing the analytical bandwidth demands of D1HT and 1h-Calot peers against those of the best (ordinary nodes) and worst (slice leaders) OneHop cases. From these figures we can see that the OneHop hierarchical approach imposes high levels of load imbalance between slice leaders and ordinary nodes. Moreover, a D1HT peer typically has maintenance requirements one order of magnitude smaller than OneHop slice leaders, while attaining similar overheads compared to ordinary nodes. Compared to D1HT, the 1h-Calot overheads were at least twice greater and typically one order of magnitude higher for the cases studied. The requirements for a D1HT peer in systems with $n$=$10^6$ and average sessions of 60, 169, 174 and 780 min are 20.7 kbps, 7.3 kbps, 7.1 kbps and 1.6 kbps, respectively. In contrast, the overheads for the OneHop slice leaders and 1h-Calot peers for systems with $n$=$10^6$ and KAD dynamics were above 140 kbps.

The Quarantine analysis will be based on data from studies that observed that $31\%$ of the Gnutella sessions \cite{chu02} and $24\%$ of the KAD sessions \cite{steiner09} lasted less than 10 minutes, which is a convenient value for the Quarantine period $T_q$. Then, Figures \ref{fig:quarantine-169} and \ref{fig:quarantine-174} show the overhead reductions provided by Quarantine for D1HT systems with dynamics similar to KAD and Gnutella, with $T_q$=10 min. We can see that the maintenance bandwidth reduction grows with the system size, as for very small systems the overheads were dominated by messages with TTL=0, which are always sent even when there are no events to report. Although the length of the Quarantine period studied was less than 6\% of the average session length for both systems, the overhead reductions with $n$=$10^7$ for KAD and Gnutella dynamics reached 24\% and 31\% respectively, showing the effectiveness of the Quarantine mechanism.

\section{Discussion} \label{sec:discussion}

In addition to validating the D1HT analysis, our experiments confirmed that D1HT was able to solve more than 99\% of the lookups with a single hop and very low CPU and memory overhead, even with nodes under high CPU load or peers widely dispersed over the Internet. For instance, in all our HPC bandwidth experiments, the average CPU usage per peer was less than 0.1\%, and the memory sizes for routing table storage were around 36 KB per peer.

Our results also showed that D1HT had the lowest overheads among all single-hop DHTs that support dynamic environments, with typical reductions of one order of magnitude for big systems. D1HT´s performance advantage was due to its ability to group events for dissemination with a pure P2P approach, even for large and dynamic environments where the system size and peer behaviors change over time. In contrast, other single-hop DHTs either do not provide means for their peers to group events \cite{tang05,sfdht09,risson06a,risson09,rodrigues02} or use a hierarchical approach with high levels of load imbalance and other intrinsic issues \cite{gupta09}.

Compared to a directory server, D1HT achieved similar latencies for small systems while attaining better scalability, which allowed it to provide latencies up to one order of magnitude better for the larger systems studied, even with nodes under full CPU load, revealing that D1HT is also an attractive solution for large-scale latency-sensitive applications.

Considering that back in 2004 the BitTorrent peer average download speed was already around 240 kbps \cite{pouwelse05}, we may assume that the D1HT with 1.6-16 kbps maintenance overheads should be negligible for systems with one to ten million peers and BitTorrent behavior. Moreover, as other studies found that most domestic connections have at \emph{least} 512 kbps of downstream bandwidth with very low occupation \cite{dischinger07,maier09}, we argue that we should not penalize lookup latencies to save fractions below 10\% of the available bandwidth. Thus, in the near future, even systems with up to ten million nodes with KAD or Gnutella dynamics will probably be able to benefit from the lowest latencies provided by D1HT with less then 65 kbps maintenance overheads.

While 1h-Calot could also be used in HPC and ISP datacenters, its use would require the development and maintenance of a DHT dedicated to those environments. In contrast, the distinguished D1HT ability to provide both low latency and small overheads may allow it to support a wide range of environments, in such a way that D1HT can act as a commodity DHT, which makes D1HT a very attractive option for these corporate datacenters, specially as they are preferably built on commodity hardware and software \cite{top500,Barroso2008}.

\section{Conclusion} \label{sec:conclusion}

While latency issues should become much more critical than bandwidth restrictions over time, the first DHT proposals have opted to trade off latency for bandwidth, and recent single-hop DHTs typically have either high overheads or poor load balance. In this work, we presented D1HT, which has a pure P2P and self-organizing approach and is the first single-hop DHT combining low maintenance bandwidth demands and good load balance, along with a Quarantine mechanism that is able to reduce the overheads caused by volatile peers in P2P systems.

We performed a very extensive and representative set of DHT comparative experiments, which validated the D1HT analysis and was complemented by analytical studies. Specifically, by using an experimental environment that was at least 10 times greater than those of all previous DHT comparative experiments, the present work became the first to assess five key aspects of DHT behavior in such practical settings. Concretely,  the present  work is the first to i) report DHT comparative experiments in two different environments; ii) compare DHT lookup latencies; iii) perform experiments with two different single-hop DHTs; iv) compare the latencies of multi and single hop DHTs; and v) compare DHTs to central directories.

Overall, our results showed that D1HT consistently had the lowest maintenance costs among the single-hop DHTs, with overhead reductions of up to one order of magnitude for large systems, and indicated that D1HT could be used even for huge systems with one million peers and dynamics similar to those of popular P2P applications.

Our experiments also showed that D1HT provides latencies comparable to those of a directory server for small systems, while exhibiting better scalability for larger ones, which shows that it is an attractive and highly scalable option for very large latency-sensitive environments.

We believe that D1HT may be very useful for several Internet and datacenter distributed applications, since the improvements in both bandwidth availability and processing capacity that we should continuously get will bring performance expectations to users and applications, which can be frustrated by latency constraints. In addition, trends in High Performance Computing, ISP and Cloud Computing environments indicate significant increases in the system sizes, which will challenge the scalability and fault tolerance of client/servers solutions.

As a consequence of our extensive set of results, we may conclude that D1HT can potentially be used in a multitude of environments, ranging from HPC and ISP datacenters to huge P2P applications deployed over the Internet, and that its attractiveness should increase over time. This ability to support such a wide range of environments may allow D1HT to be used as an inexpensive and scalable commodity software substrate for distributed applications. As one step in that direction, we have made our D1HT source code available for free use \cite{D1HTsource}.

\vspace{-6pt}
\section*{Acknowledgments}
{
%\small
We would like to thank Petrobras for providing access to the clusters used in the experiments and authorizing the public dissemination of the results. The PlanetLab experiments would not have been possible without the support of the PlanetLab and Brazilian RNP teams. This research was partially sponsored by Brazilian CNPq and FINEP.}

\bibliographystyle{plain}
\bibliography{bibedit2}
\end{document}